\def \alg{\texttt{DEFINED}\xspace}
\newcommand{\Prob}{\mathbb{P}}
\newcommand{\probp}[1]{\mathbb{P}\left(#1\right)}
\newcommand{\iid}{\stackrel{\mathrm{ i.i.d.}}{\sim}}
\newcommand{\pth}[1]{\left( #1 \right)}
\crefname{assumption}{Assumption}{Assumptions}
\newcommand{\test}{\mathsf{test}}
\newcommand{\query}{\mathsf{query}}
\newcommand{\out}{\mathsf{out}}
\def\E{\mathbb{E}}
\newcommand{\dis}{\mathcal{P}}
\def\testx{x}
\def\testy{y}
\newcommand{\normal}{\mathsf{N}}
\newcommand{\congr}[1]{{\color{blue}#1}}
\newcommand{\congr}[1]{#}
\newcommand{\congc}[1]{{\color{red}(Cong: #1)}}
\newcommand{\congc}[1]{}
\newcommand{\jing}[1]{{\color{red}#1}}
\newcommand{\jing}[1]{#}
\newcommand{\jingc}[1]{{\color{red}(JY: #1)}}
\newcommand{\jingc}[1]{}
\newcommand{\weir}[1]{{\color{orange}#1}}
\newcommand{\weir}[1]{#}
\newcommand{\li}[1]{{\color[HTML]{007FFF}#1}} 
\newcommand{\li}[1]{#}
\newcommand{\lic}[1]{{\color[HTML]{007FFF}(Li: #1)}} 
\newcommand{\lic}[1]{}
\title{Decision Feedback In-Context Learning for Wireless Symbol Detection}
\author{Li~Fan,~\IEEEmembership{Graduate Student Member,~IEEE,}    
        Wei~Shen,~\IEEEmembership{Graduate Student Member,~IEEE,}\\
        Jing~Yang,~\IEEEmembership{Senior Member,~IEEE,}
        and~Cong~Shen~\IEEEmembership{Senior Member,~IEEE}
\thanks{A preliminary version of this work was presented at the 2025 IEEE International Conference on Communications \cite{fan2025icc}. Simulation code and examples are available at \texttt{\url{https://github.com/ShenGroup/DEFINED}}.}
\thanks{The authors are with the Charles L. Brown Department of Electrical and Computer Engineering, University of Virginia, USA. (E-mail: \texttt{\{lf2by, zyy5hb, yangjing, cong\}@virginia.edu}.)
}
}
\begin{document}

\maketitle

\begin{abstract}
Pre-trained Transformers, through in-context learning (ICL), have demonstrated exceptional capabilities to adapt to new tasks using example prompts \textit{without model update}. Transformer-based wireless receivers, where prompts consist of the pilot data in the form of transmitted and received signal pairs, have shown high detection accuracy when pilot data are abundant. However, pilot information is often costly and limited in practice. In this work, we propose \underline{DE}cision \underline{F}eedback \underline{IN}-Cont\underline{E}xt \underline{D}etection (\alg) as a new wireless receiver design, which bypasses channel estimation and directly performs symbol detection using the (sometimes extremely) limited pilot data. The key innovation in \alg is the proposed decision feedback mechanism in ICL, where we sequentially incorporate the detected symbols into the prompts as \emph{pseudo-labels} to improve the detection for subsequent symbols. We further establish an error lower bound and provide theoretical insights into the model's generalization under channel distribution mismatch. Extensive experiments across a broad range of wireless settings demonstrate that a small Transformer trained with \alg achieves significant performance improvements over conventional methods, in some cases only needing a single pilot pair to achieve similar performance to the latter with more than 4 pilot pairs. 

\end{abstract}

\begin{IEEEkeywords}
In-Context Learning, Transformer, Symbol Detection, Decision Feedback, Wireless Channels
\end{IEEEkeywords}

\section{Introduction}


Wireless receiver symbol detection aims to recover transmitted symbols over fading channels with varying SNRs. Traditional methods use a two-step process: channel estimation (e.g., via Minimum Mean Square Error (MMSE)) followed by symbol detection. However, this approach can be computationally intensive and sensitive to estimation errors, especially in high-dimensional systems. In contrast, machine learning-based approaches, including deep learning models, enable end-to-end learning that better adapts to complex, nonlinear environments and system imperfections, while supporting joint optimization across transceiver components. Such data-driven designs align with the vision of AI-native 6G systems \cite{hoydis2021toward}, where learning-based architectures are integrated from the ground up to enhance adaptability, efficiency, and performance.

Despite their potential, deep learning (DL) based solutions have not been widely deployed in real-world wireless systems. Among the many challenges, data dependency and generalization are among the most significant obstacles \cite{simeone2020learning}.  DL models require large amounts of diverse and high-quality training data to train, which is difficult to obtain in many wireless deployments. Poor generalization to new, unseen wireless environments (e.g., new channel conditions, interference patterns) is a particularly significant challenge due to the diverse use cases in wireless systems.  Although meta-learning \cite{finn2017model} can help with adapting to new tasks, existing approaches \cite{chen2023learning, park2020learning} often rely on explicit model parameter updates, which increase computational complexity and reduce robustness during adaptation. 



Advances in Transformer models \cite{vaswani2017attention}, particularly decoder-only architectures like GPT \cite{radford2019language}, have driven remarkable progress in natural language processing and a broad spectrum of other domains.  Recent work \cite{NEURIPS2020_1457c0d6} demonstrates that pre-trained Transformers can generalize to new tasks during inference through in-context learning (ICL), without requiring explicit model updates.  Specifically, the input is structured as 
\((y_1, f(y_1), \ldots, y_n, f(y_n), y_{\text{query}})\), where \((y_1, \ldots, y_n)\) represent features in the problem domain and \(f\) is an unknown mapping. By leveraging the Transformer's advanced next-token prediction capability, a well-pretrained model can approximate \(f(y_{\text{query}})\) with high accuracy for many classes of functions, conditioned on the given context. 

The wireless symbol detection problem involves estimating the transmitted symbol from the received signal, where the mapping between the transmitted and received signals can be modeled as a noisy function. This formulation fits naturally with the ICL framework and Transformer-based sequence models.
\cite{teja2023transformers} introduces Transformers for this task using ICL, framing it as a regression problem with MSE loss and achieving near-MMSE performance.  Subsequent work \cite{zecchin2024context} extends this approach to multiple-input multiple-output (MIMO) systems, emphasizing the importance of task diversity during pre-training. Similarly, \cite{zecchin2024cell} demonstrates that Transformers are robust to pilot contamination issues in multi-user MIMO systems. While these studies focus on training relatively small Transformers from scratch, \cite{abbas2024leveraging} proposes to leverage large language models (LLMs) to apply in-context learning. 
This method transforms the detection problem from a numerical task into a linguistic one, incorporating LLM calibration techniques. Collectively, these studies highlight that Transformers are powerful and versatile models, well-suited for addressing a range of challenges in wireless communication systems.

Despite these advances, prior studies face several limitations. Most solutions treat symbol detection as a regression task, relying on MSE-based objectives and post-processing to map continuous outputs to discrete symbols. Furthermore, many methods require an abundance of pilot pairs to achieve reasonable performance, which may not be feasible in practice. Additionally, the use of large models incurs long delay and requires significant memory and computation resources, further limiting their applicability in real-world deployments.

Inspired by decision feedback in wireless communication (e.g., decision feedback equalizers over multi-path fading channels), we propose a novel \textit{prompt design} by incorporating decision pairs. Specifically, we combine the current received signal with the model’s detected symbol in a pair, merging them with previous prompts to form a new, larger prompt for subsequent symbol detection. Our \alg model employs a carefully designed mixture training process to achieve high performance with limited pilots (sometimes as few as a single pilot) while maintaining accuracy when sufficient pilots are available. This {result} highlights the promising potential of Transformers in future wireless communication systems. Extensive experiments across various modulation schemes {and channel distributions} validate the effectiveness of our approach.
We summarize our contributions as follows:

\begin{itemize}[leftmargin=*]\itemsep=0pt
\item We propose a novel Transformer model that jointly performs channel estimation and symbol detection. Our key innovation in \alg is the incorporation of decision feedback to allow decision feedback to be used in conjunction with the limited pilot data, to gradually increase the effective context length and thus improve the output performance.

\item We develop a mixture training process for \alg, achieving significant performance improvements with very limited pilot data while maintaining high accuracy with sufficient pilot data, making the model adaptable to practical scenarios.

\item Theoretically, we derive an error lower bound and provide theoretical insights into the model's generalization capability under channel distribution mismatch.

\item We conduct extensive experiments across various modulation schemes under varying configurations, where we show that a very small Transformer model trained by \alg can achieve strong accuracy and generalization performance.

\end{itemize}

The remainder of this paper is organized as follows. Section~\ref{sec:related} reviews the related work. Section~\ref{sec:system model} presents the system model and canonical methods. Section~\ref{sec:ICL_detection} introduces in-context learning-based symbol detection. The proposed \alg method, including the model structure and training details, is described in Section~\ref{sec:DEFINED}. Section~\ref{sec:theory} provides the theoretical analysis of \alg. Section~\ref{sec:expriment} presents experimental results, and Section~\ref{sec:conc} concludes the paper.

\section{Related Work}
\label{sec:related}

Machine learning is playing a pivotal role in shaping wireless system design and is a key enabler of AI-native 6G architectures \cite{Liang2018jstsp,Liang2020tc,simeone2018very,yang2024dyspan,yang2024twc,yang2024asilomar}. 
The Transformer model \cite{vaswani2017attention} has become fundamental across various fields. GPT-2 extends Transformers for multitask learning \cite{radford2019language}, while GPT-3 introduces in-context learning (ICL), allowing adaptation solely through contextual inputs \cite{NEURIPS2020_1457c0d6}. Unlike meta-learning, ICL does not require parameter updates, making it well-suited for wireless applications. 
Theoretical studies reveal that Transformers approximate regression tasks \cite{garg2022can} and perform Bayesian inference \cite{panwarcontext}. Pre-training on diverse tasks enhances ICL performance \cite{raventos2024pretraining}, and Transformers exhibit implicit gradient descent during ICL, optimizing predictions without updating parameters \cite{von2023transformers, ahn2023transformers}. Research also highlights the importance of input structure, showing that distribution, label space, and structural cues significantly influence ICL performance \cite{chan2022data}.

ICL has been applied to wireless tasks such as channel estimation and resource allocation. \cite{teja2023transformers} first introduced Transformers for symbol detection, demonstrating their effectiveness as in-context estimators. \cite{zecchin2024context} extended this to MIMO systems, highlighting threshold behavior in ICL with increasing pre-training tasks, while \cite{zecchin2024cell} explored multi-user cell-free MIMO systems. LLMs have also been adapted for wireless applications, including resource management and network configuration. \cite{abbas2024leveraging} applies ICL-based LLMs for symbol detection, while \cite{shao2024wirelessllm, tong2024wirelessagent} propose WirelessLLM and WirelessAgent frameworks, respectively, leveraging knowledge alignment and multimodal data fusion. Unlike these approaches, our work focuses on training a Transformer-based receiver from scratch. 
Additionally, Transformers have also been used for radio fingerprinting \cite{ott2024radio}, automatic modulation recognition (AMR) \cite{rashvand2024enhancing}, spectrum sensing \cite{zhang2024spectrum}, and neuromorphic designs aimed at reducing power consumption \cite{song2024neuromorphic}. 


\section{System Model and Canonical Methods}
\label{sec:system model}
\subsection{Wireless Model}
\label{sec:problem}
To more clearly illustrate our design, we consider a canonical receiver symbol detection problem over a standard narrowband wireless fading channel. Broadly, we consider an {$(N_t,N_r)$} MIMO system, where the channel is represented by an $N_r \times N_t$ complex-valued matrix $H_t$ at time $t$, following a distribution $P_H$. We {do not specify the channel distribution but} normalize the channel coefficients such that each entry in $H_t$ has a unit variance. The received signal at time $t$ is expressed as: 
\begin{equation*}
    y_t = H_t x_t + z_t,
\end{equation*}
where the channel noise $z_t \in \mathbb{C}^{N_r}$ is modeled as a complex additive white Gaussian noise vector with zero mean and covariance matrix $\sigma^2 I$. Each entry of the input vector $x_t \in \mathbb{C}^{N_t}$ is sampled uniformly at random from a constellation set $\mathcal{X}$ (e.g., 16QAM), and this modulation process is independently and identically distributed (i.i.d.) across both time and space.  We normalize the signal to ensure a unit average total transmit power, i.e., $\mathbb{E}[\|x_t\|^2] = 1$. The average signal-to-noise ratio (SNR) at any receive antenna is thus given by $\text{SNR} = {1}/{\sigma^2}$.

We focus on the \emph{block-fading} channel model \cite{TV:05} as illustrated in \cref{fig:blockfading}. In this model, the channel $H_t$ remains constant over a coherent time of $T$ time slots, and is i.i.d. across different coherence periods. In other words, 
$H_t=h_l, \forall t=(l-1)T+1,\cdots, lT$, 
for the $l$-th coherence period where $h_l$ is drawn i.i.d. from $P_H$.  Correspondingly, the data transmission is organized into frames, where each frame has a length that is at most $T$. The frame structure is designed such that the first $k$ transmitted symbols are known and pre-determined \emph{pilot symbols}, whose original purposes include assisting the receiver to perform channel estimation \cite{Cav:91} of $h_l$ so that it can perform coherent symbol detection. Based on the reception of pilot pairs $D_{k} = \{(y_1, x_1), \cdots, (y_k, x_k)\}$, 
the goal is to design a demodulator that accurately recovers the transmitted symbols $x_{k+1}, \cdots, x_{T}$ from the received signals $y_{k+1}, \cdots, y_{T}$ with high probability and reasonable complexity.

\begin{figure}[!htbp]
    \centering
    \includegraphics[width=0.4\textwidth]{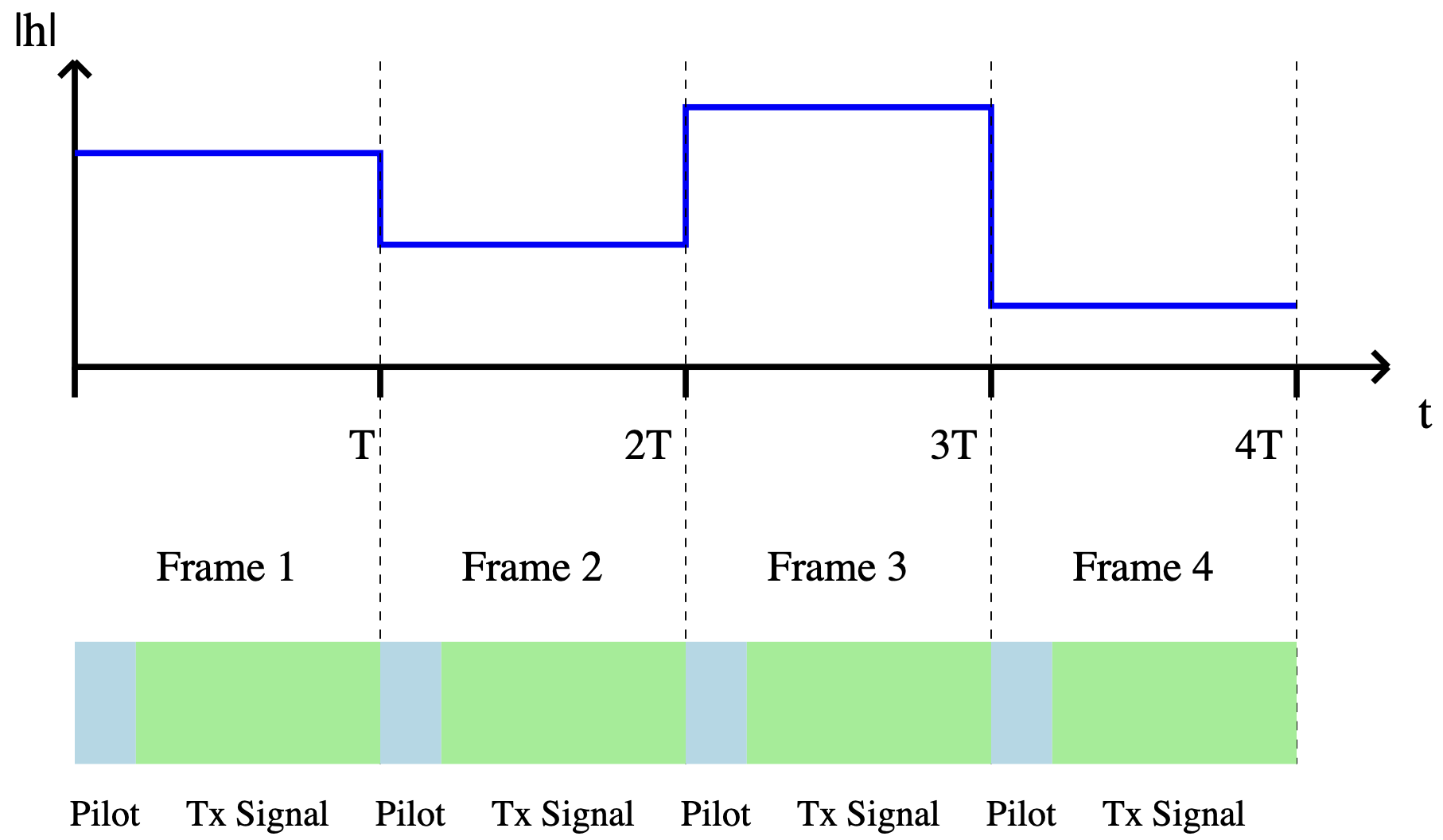}
    \caption{Illustration of block-fading channels and the associated frame structure. In each frame, the channel remains constant and the frame is divided into pilot signals and transmit data signals.} 
    \label{fig:blockfading}
\end{figure}

\subsection{Canonical Methods for Coherent Symbol Detection}
In the traditional approach, the receiver first estimates the channel using pilot signals, then performs symbol detection on the received signal $y_{t}$ via hypothesis testing for each $t = k+1, \cdots, T$. Typically, the (Linear) MMSE estimator is used for channel estimation, {which} is given by
$\hat{H} = Y X^{\dagger} (X X^{\dagger} + \sigma^2 I)^{-1}$, 
where $X = [x_1,x_2,\cdots,x_{k}] \in \mathbb{C}^{N_t \times k} $ is the pilot matrix and $Y = [y_1, y_2,\cdots,y_k] \in \mathbb{C}^{N_r \times k }$ is the received signal matrix. Then, with the estimated channel, the transmitted symbol $\hat{x}_{t}$ is detected by projecting $y_{t}$ onto the closest symbol in the modulation constellation $\mathcal{X}$ as
$\hat{x}_{t} = \arg \min_{x \in \mathcal{X}} \| \hat{H}x - y_{t} \|^2,  \forall t = k+1, \cdots, T$.

This two-step process treats channel estimation and symbol detection as separate tasks. Such decoupling can result in suboptimal detection, particularly under noisy conditions or limited pilot data \cite{Shen2008jsac}. Optimal estimators like MMSE rely on precise statistical models of the channel and noise, which can be difficult to obtain for complex wireless environments. These estimators are also computationally intensive due to matrix inversions and posterior probability calculations, making them less appealing for real-time high-dimensional systems.

To address these challenges, data-driven methods \cite{simeone2018very} have been explored for joint channel estimation and symbol detection. Deep learning architectures have shown promise \cite{le2021deep, neumann2018learning, lu2018mimo, aoudia2021end, park2020learning, chen2018neural}, but their high data requirements \cite{kim2020massive} and poor adaptability to varying channel conditions without retraining limit their real-world applicability \cite{simeone2020learning}.



\section{In-Context Learning-Based Symbol Detection}
\label{sec:ICL_detection}
ICL for symbol detection leverages the structure of wireless communication frames, particularly in block-fading channels where channel conditions remain stable during the coherence time. Within each frame, pilot signals are followed by subsequent received signals, which naturally align with the Transformer architecture's strength in processing \emph{sequence-based inputs}. The Transformer's ability to model dependencies among sequential data allows it to capture complex relationships within transmitted signals, making it highly effective for symbol detection tasks.

In this section, we discuss the ICL-based symbol detection method. We first formulate the symbol detection problem, and then present the ICL implementation using a popular Transformer model GPT-2, which also serves as the backbone of our proposed solution\footnote{We use GPT-2 with elaborated design choices as a concrete example throughout the paper. However, the proposed principle can be easily adapted to other Transformer architectures.}.

\subsection{ICL-based Problem Reformulation}
Each ICL symbol detection task $\tau$ corresponds to a latent channel $H$ and a channel noise level $\sigma^2$, drawn from the unknown joint distribution $P_{\tau} = P_H P_{\sigma^2}$. 
The receiver does not have any prior knowledge of the specific task $\tau$, meaning it does not know the current channel realization $H$ or the SNR level {$1/\sigma^2$}. Instead, it is provided only with a prompt $S_{t}^{\tau} = (D_{k}^{\tau}, y_{t})$,
consisting of $k$ target pairs $D_{k}^{\tau} = \{(y_1, x_1), \cdots, (y_k, x_k)\}$, 
which are sampled from the conditional distribution $P_{x,y|\tau}$ and serve as the in-context examples for the current task $\tau$, along with the query $y_{t}$, $\forall t = k+1, \cdots, T$. For block-fading channels, the context set $D_{k}^{\tau}$ corresponds to the \emph{pilot signals}. 

We note that a significant advantage of this reformulation (and the subsequently proposed solution) is that there is no need to change the {existing} frame structure or the design of pilot signals. Rather, the innovation is entirely at the receiver side where we leverage the pilot and decoded signals in a different way. This is an important advantage in practice as it allows for backward compatibility with the existing standard.

The goal of symbol detection is to identify the corresponding input signal $x_{t}$ for the new query signal $y_{t}$ from the same task due to the block-fading nature. The ICL-based detection makes its decision as $\hat{x}_{t} = f_{\theta}(S_{t}^{\tau})$, 
where $\theta$ represents the parameters of the model. The detection for the query is measured by the Symbol Error Rate (SER), which is the frequency at which transmitted symbols are incorrectly decoded. The expected SER for the new query with $k$ contexts, taking the expectation over the task distribution for $\forall k = 1, \cdots, T-1$, is defined as:
\begin{equation}
\text{SER}_{k}(\theta) = \mathbb{E}_{\tau}  \mathbb{E}_{D_k^{\tau},y_t|\tau} \left[ f_{\theta}(D_{k}^{\tau}, y_t) \neq x_{t} \right].
\label{eqn:expected_SER}
\end{equation}

\subsection{Vanilla In-Context Symbol Detection}
\label{sec:vanilla_ICL}

Transformer models have emerged as powerful tools for classification tasks \cite{shen2024training}, leveraging their ability to capture long-range dependencies. This approach to wireless symbol detection was introduced in \cite{teja2023transformers, zecchin2024context}. 
The input-output structure is illustrated in \cref{fig:model_ICL}. With a causal masked self-attention mechanism, the model outputs the detection $\hat{x}_{t}$ at the corresponding position of $y_{t}$, relying only on known preceding contexts and the received signal. During the forward process, the Transformer solves $k+1$ detection problems for the same task $\tau$, using increasing pilot data points. The results {in \cite{teja2023transformers, zecchin2024context}} demonstrate that the Transformer exhibits strong capabilities in symbol estimation within context, without requiring explicit model updates.

\begin{figure}[!htbp] 
    \centering
    \includegraphics[width=0.4\textwidth]{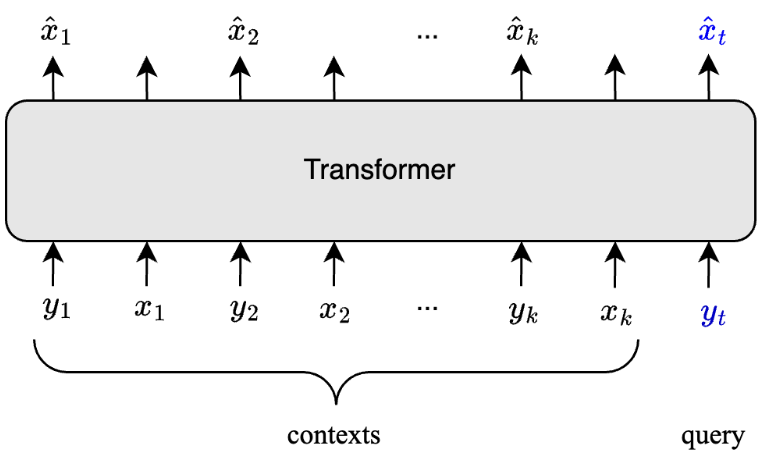}
    \caption{Decoder-only Transformer architecture for {ICL}-based symbol detection with $k$ pilots. Detection output applies to $t=k+1, \cdots, T$. }
    \label{fig:model_ICL}
\end{figure}

\section{Decision Feedback In-Context Detection}
\label{sec:DEFINED}
The vanilla ICL approaches for symbol detection require sufficient context to achieve accurate estimation, which is often impractical in real-world scenarios. Pilot signals are costly and limited, reducing their adaptability for practical applications. 
Furthermore, transmitting more pilot signals reduces system throughput and energy efficiency. 
For situations where the number of pilots is small, neither conventional two-stage (channel estimation then symbol detection) nor vanilla ICL solutions can achieve good performance. Additionally, the ICL approaches generally formulate the symbol detection task as a \emph{regression} problem, as in \cite{raventos2024pretraining, zecchin2024context, zecchin2024cell}, where a Transformer model is trained to minimize the {MSE} loss. Although their models achieve performances comparable to the optimal MMSE estimator for $x$, an additional projection step is required to map the output to the transmitted symbol, leading to a mismatch and losing optimality in the process.

In contrast, we directly formulate the problem as a \emph{multi-class classification} task \cite{shen2024training}, allowing the model to perform symbol detection while directly optimizing for SER during inference. This approach eliminates the need for explicit channel estimation, as the model implicitly gains knowledge of the channel through the pilot signal pairs. 
Additionally, inspired by the decision-feedback concept in wireless communication, we propose the \underline{DE}cision \underline{F}eedback \underline{IN}-Cont\underline{E}xt \underline{D}etection (\alg) {method} for symbol detection, as illustrated in \cref{fig:model_DFE}. In this approach, \alg sequentially feeds back the already decoded symbol pairs as \emph{noisy pilots} and incorporates them as part of the prompt. 
As will be seen, \alg achieves robust performance even under challenging conditions with only a \emph{single} pilot, outperforming previous ICL models that struggle with insufficient pilot data. At the same time, it maintains high accuracy when sufficient pilot data is available.

\begin{figure}[!htbp]
    \centering
    \includegraphics[width=0.4\textwidth]{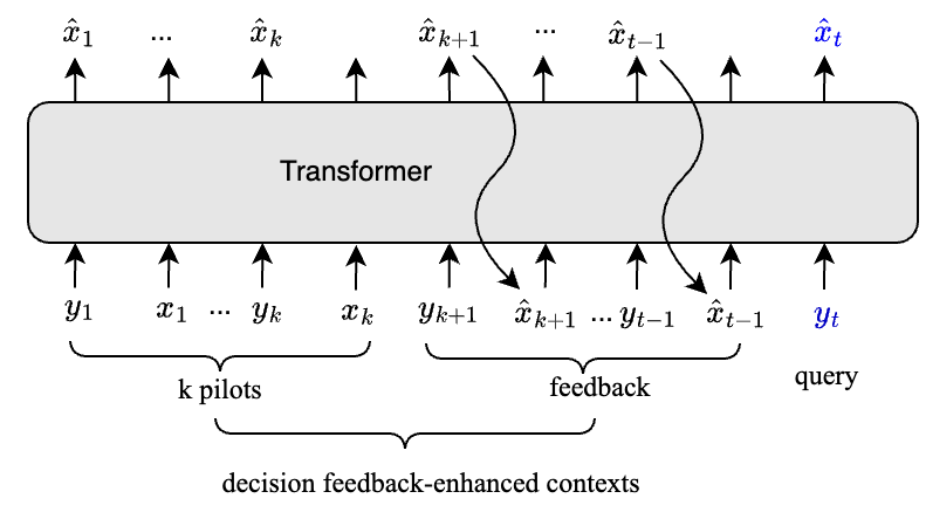}
    \caption{\alg model architecture with $k$ pilots and $(t-k-1)$ decision feedback contexts to detect $x_{t}$, $\forall t=k+1, \cdots, T$.}
    \label{fig:model_DFE}
\end{figure}


Next, we provide a detailed description of the \alg model structure and explain its data processing flow. Specifically, we highlight the critical role of the multi-head self-attention mechanism in enabling efficient information extraction. Furthermore, we discuss the model's size and describe the training process, which comprises dedicated pre-training and fine-tuning phases, as outlined in the following subsections.

\subsection{Tokenization, Attention, and Detection}

The overall model architecture of \alg is given in \cref{fig:model_attention}.  
The computation involves several key steps. 

First, tokenization converts the input IQ samples into embeddings, which are then processed through \(L\) cascaded layers of the decoder backbone that is adapted from the GPT-2 model. The model employs masked multi-head self-attention, a pivotal operation in Transformer architectures. This attention mechanism enables the output embeddings to capture long-range dependencies and extract relevant information from the input sequence. In the context of symbol detection, these dependencies represent the relationships between earlier pilot pairs and the current detection task involving the received query signal, both of which depend on the same channel $h_l$ in the block fading model. Capturing these relationships is critical for learning signal characteristics and the underlying channel structure, ultimately improving detection accuracy.

Following the self-attention step, the Transformer’s output passes through a classification head consisting of a linear transformation, which projects it into the classification dimension. Finally, a softmax layer computes the class probabilities, and the class with the highest probability is selected as the detected symbol. Each component is explained in detail below.

\begin{figure}[!htbp]
    \centering
    \includegraphics[width=0.35\textwidth]{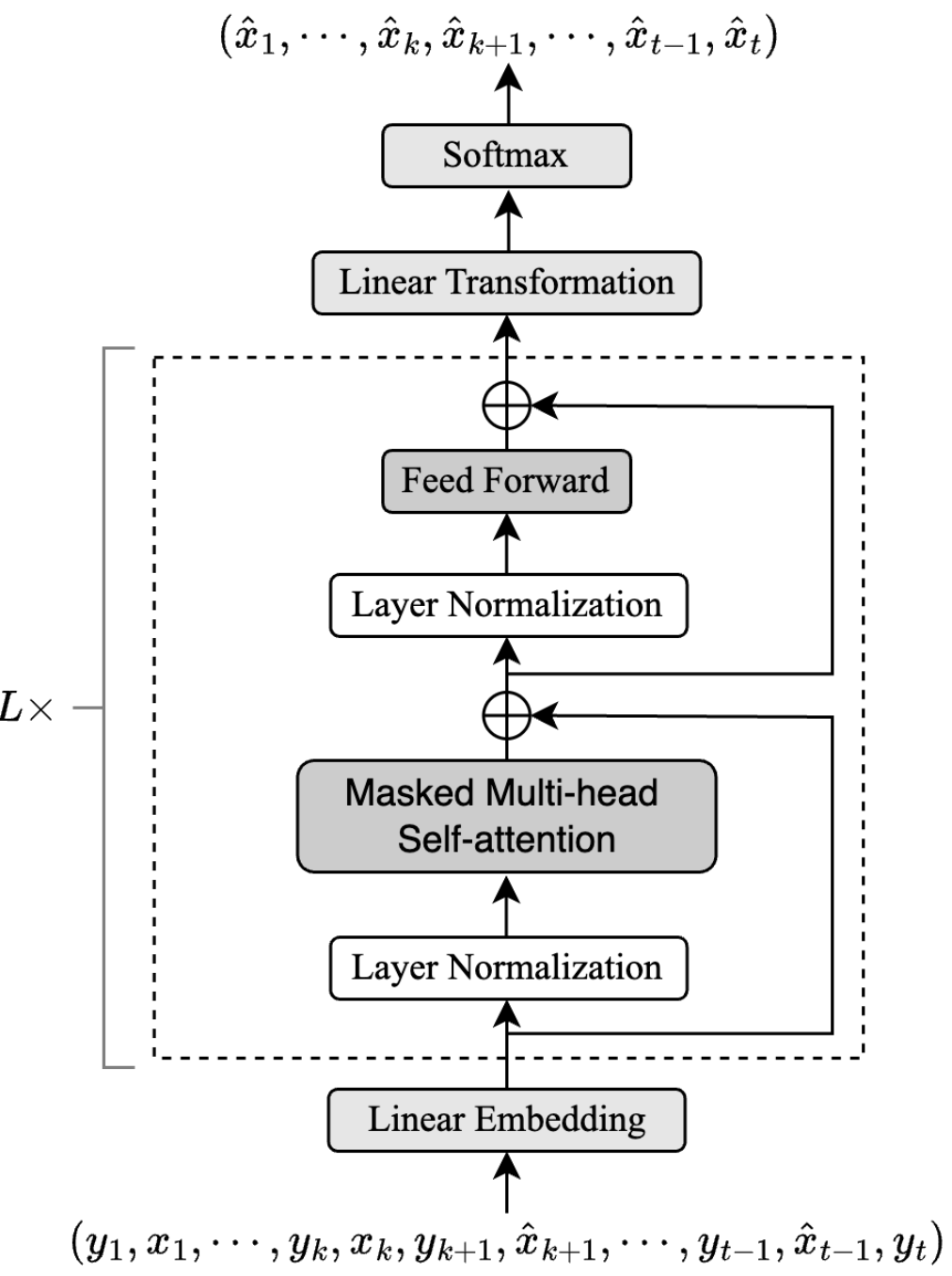}
    \caption{\alg model architecture featuring linear embedding for tokenization, \(L\) cascaded decoder layers, and a classification head.}
    \label{fig:model_attention}
\end{figure}

\subsubsection{Linear Embedding}
The first step in the data processing pipeline is tokenization, which converts the IQ {samples} into input embeddings. Both the transmitted and received signals are complex-valued. For the received signal $y \in \mathbb{C}^{N_r}$, we concatenate its real and imaginary parts as 
$\tilde{y} = [\Re(y), \Im(y)] \in \mathbb{R}^{2N_r}$, 
where $\Re(y)$ and $\Im(y)$ represent the real and imaginary parts of a complex signal $y$, respectively. 
For the transmitted signal $x \in \mathbb{C}^{N_t}$, where each symbol is sampled from a discrete constellation set with {size} $C$, we represent each signal using one-hot encoding: 
$\tilde{x} = \text{Onehot}(x) \in \mathbb{R}^{C^{N_t}}.$ 

Next, we apply zero padding to ensure uniform dimensionality $D_s = \max(2N_r, C^{N_t})$. The input is then passed through a linear embedding layer parameterized by a learnable matrix $A \in \mathbb{R}^{D_e \times D_s}$. The resulting embedding sequence is
$E = [A\tilde{y}_1, A\tilde{x}_1, \cdots, A\tilde{y}_{t}],$ 
which converts the input signals into a unified embedding space for further processing by the Transformer backbone.

\subsubsection{Decoder Layers with Masked Multi-Head Attention}  
Our model adopts the GPT-2 backbone with $L$ stacked decoder layers, where each layer refines the representation of the input sequence:  
$E^{(0)} = E, E^{(\ell + 1)} = \text{DecoderLayer}(E^{(\ell)}), \ell = 0, \dots, L-1.$ 
Each decoder layer consists of masked multi-head attention (MHA), a feedforward network (FFN), and residual connections with layer normalization. GPT-2 employs a pre-normalization scheme, where layer normalization is applied before each sub-layer \cite{radford2019language}. The forward pass computation of the Transformer follows a standard process, which we summarize here for completeness.

\paragraph{Masked Multi-Head Attention}  
MHA captures long-range dependencies across the input sequence, which, in our case, is established via the common channel $h_l$. Given the normalized input:
$\tilde{E}^{(\ell)} = \text{LayerNorm}(E^{(\ell)}),$ 
queries ($Q$), keys ($K$), and values ($V$) are computed as:
$Q = \tilde{E}^{(\ell)} W_Q, K = \tilde{E}^{(\ell)} W_K, V = \tilde{E}^{(\ell)} W_V.$ 
To enforce autoregressive behavior, a causal mask $M$ prevents attention to future tokens:
$\text{MaskedAttention}(Q, K, V) = \text{softmax}\left(\frac{Q K^\top}{\sqrt{d_k}} + M\right) V.$ 
Outputs from multiple attention heads are concatenated and linearly projected:
$\text{MHA}(\tilde{E}^{(\ell)}) = \text{Concat}(\text{head}_1, \dots, \text{head}_h) W_O.$ 
A residual connection is then applied:
$U^{(\ell)} = E^{(\ell)} + \text{MHA}(\tilde{E}^{(\ell)}).$ 

\paragraph{Feedforward Network}  
The FFN processes the intermediate output through two linear transformations with a ReLU activation: 
$\text{FFN}(\tilde{U}^{(\ell)}) = \text{ReLU}(\tilde{U}^{(\ell)} W_1 + b_1) W_2 + b_2.$ 
A residual connection produces the final layer output: 
$E^{(\ell + 1)} = U^{(\ell)} + \text{FFN}(\text{LayerNorm}(U^{(\ell)})).$ 
Pre-normalization stabilizes training, as shown in \cite{radford2019language}.

\subsubsection{Classification Layer}  
The final decoder output, $E^{(L)}$, is projected onto the classification space, where a softmax function determines the most likely transmitted symbol: 
$\hat{x} = \arg \max \left(\text{softmax}(W_c E^{(L)})\right),$ 
where $W_c$ is the classification weight matrix.

\subsection{Model Parameters}

Our specific Transformer model is designed with an embedding dimension of $d_e = 64$, $L = 8$ layers, and $h = 8$ attention heads, resulting in approximately \textbf{0.42 million} parameters, which is significantly smaller compared to large language models (LLMs) commonly applied in wireless communication tasks, such as those discussed by \cite{shao2024wirelessllm, abbas2024leveraging}. For instance, even the smallest LLM used in \cite{abbas2024leveraging}, GPT-J 6B, contains over 6 billion parameters, making it approximately 14,000 times larger than \alg. 
The compact size of \alg not only enables its deployment on mobile devices but also significantly shortens the inference time, enabling low-latency detection at the receiver. 

\subsection{Training Design}

In this section, we describe our data generation process and the training of \alg. 

\subsubsection{Data Generation}
\label{sec:data_generation}
We generate data according to the wireless system model described in \cref{sec:problem}. Specifically, we consider both SISO and $2 \times 2$ MIMO systems and explore various modulation schemes, including BPSK, QPSK, 16QAM, and 64QAM. For each wireless system and modulation task, we generate prompts consisting of sequences with $T$ pairs, where the maximum sequence length is set to $T = 31$. The training data only use Rayleigh block-fading channels, with the channel coefficient sampled as $H \sim P_H = \mathcal{CN}(0, 1)$ and remaining constant over each block. The channel noise is sampled i.i.d. from a Gaussian distribution $z_t \sim \mathcal{CN}(0, \sigma^2 I)$. Within each block, the noise variance $\sigma^2$ is constant and shared across all noise samples. However, for different blocks, $\sigma^2$ is i.i.d. sampled from a uniform distribution $P_{\sigma^2}$ over the range $[\sigma_{\min}^2, \sigma_{\max}^2]$.
Varying noise levels during training enhances generalization by exposing the model to diverse channel conditions. 
This variability applies only during training, while evaluation is conducted at fixed SNR levels for controlled performance assessment.
The training batch size is set to 512.

\subsubsection{ICL Pre-training} 

\begin{figure}[!htbp]
    \centering
    \includegraphics[width=0.5\textwidth]{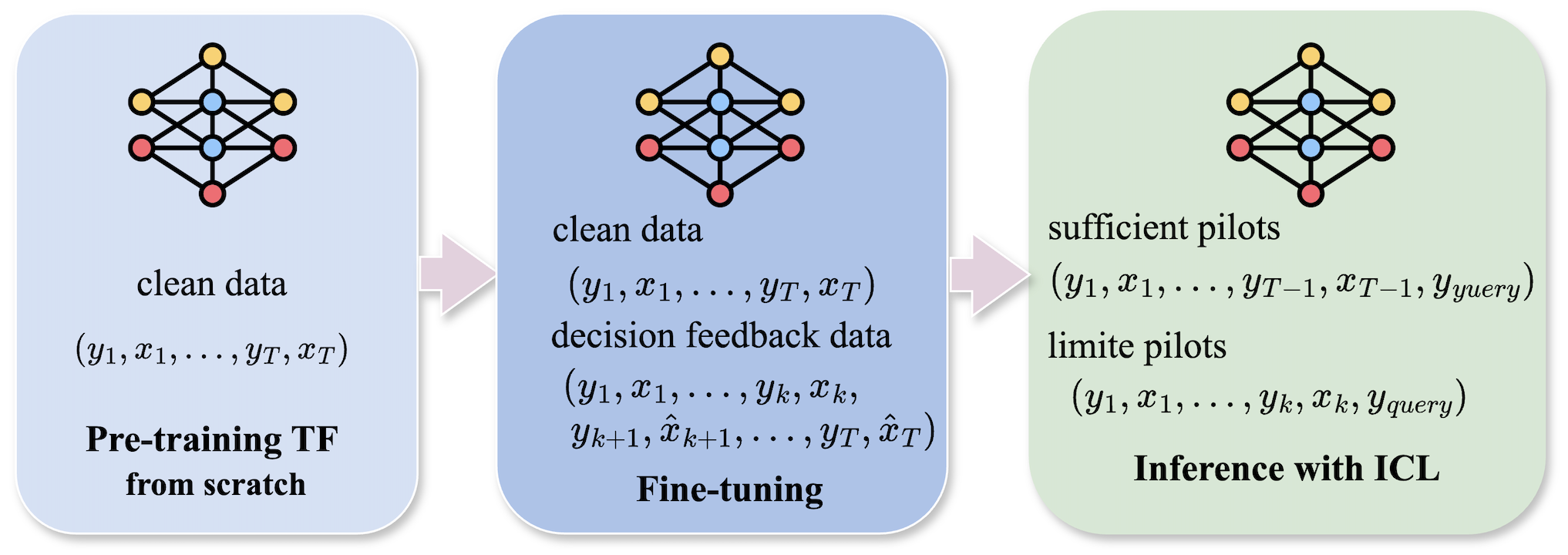}
    \caption{ The training process includes pre-training on clean data, followed by fine-tuning on a mixed dataset of clean and decision feedback noisy data. The model demonstrates strong performance, adapting to both limited and sufficient pilot scenarios during symbol detection (i.e., inference).}
    \label{fig:train_process}
\end{figure}

We delve into the details of \alg pre-training using the previously described data generation, which is highly nontrivial. To prepare for the discussion, we define \textbf{ICL-training} and \textbf{ICL-testing} as {the corresponding} operations on ground-truth data, represented by the clean prompt for $t = 1, 2, \ldots, T$: $S_{t}^{\text{ICL}} = \{y_1, x_1, \ldots, y_{t-1}, x_{t-1}, y_t\}$.  
On the other hand, \textbf{DF-training} and \textbf{DF-testing} utilize iteratively decoded sequences with $k$ pilot data and model decision feedback, operating on the decision feedback prompt for $t = k+1, \ldots, T$: $S^{\text{DF}}_{t} = \{y_1, x_1, \ldots, y_k, x_k, y_{k+1}, \hat{x}_{k+1}, \ldots, y_{t-1}, \hat{x}_{t-1}, y_t\}$. 
Each estimation $\hat{x}_j$ for $j = k+1, \cdots, t-1$ is the model output, which depends on the first $k$ pilot points and preceding model decisions. This output evolves during model training and gradually converges as training progresses.

We next define the loss functions for ICL-training and DF-training. The adopted loss function is the cross-entropy loss between the model’s output and the ground-truth labels, defined as $\text{loss}(\hat{x}, x) = -\sum_{c \in \mathcal{X}} \mathbbm{1}(x = c) \log P(\hat{x} = c)$, 
where $\hat{x}$ is the predicted output, $x$ is the ground-truth symbol, $\mathcal{X}$ represents the modulation constellation set, and $\mathbbm{1}(\cdot)$ is the indicator function.

Using this loss definition, the loss functions for ICL-training and DF-training are formulated as:
\begin{align}
    \mathcal{L}^{\text{ICL}}(\theta) &=  \frac{1}{N T} \sum_{i=1}^{N} \sum_{t=1}^{T}  \text{loss} \left( f_{\theta}(S_{t,i}^{ICL}), x_{t,i} \right), \label{loss:ICL} \\
    \mathcal{L}^{\text{DF}}(\theta) &=  \frac{1}{N (T-k)} \sum_{i=1}^{N} \sum_{t=k+1}^{T} \text{loss} \left( f_{\theta}(S_{t,i}^{DF}), x_{t,i} \right), 
    \label{loss:DF}
\end{align}
respectively, where $\theta$ represents the model parameters, and $N$ is the number of samples.

In DF-training, we first freeze the Transformer model's gradients to prevent parameter updates during the generation of decision feedback prompts. This process involves iterative forward passes, where the model's noisy output is incorporated into the prompt to create a new context. The generated prompts are then used to train the model and update its parameters. However, DF-training is time-intensive, as each step of detection and feedback requires a forward pass through the model. Additionally, the presence of noisy data complicates convergence, making effective training more challenging.

To tackle this computational challenge, one might consider directly employing ICL training for \alg. However, this approach would lead to a mismatch: \alg is trained on clean data but evaluated with noisy contexts. This discrepancy will naturally lead to performance degradation, even though we have empirically observed that ICL-training is approximately ten times faster than DF-training, as it eliminates the need for iterative data sampling and exclusively uses clean data.

Considering all factors, our final proposed solution is to \emph{perform ICL-training first, followed by tailored DF-training}. Here, ICL-training serves as pre-training, while tailored DF-training acts as fine-tuning, similar to the pre-training and fine-tuning processes used in LLMs. Training epochs are carefully structured into two phases, as conceptually illustrated in \cref{fig:train_process} and empirically shown in \cref{fig:train_loss}. In the first phase, the model converges just before reaching a plateau, at which point we transition to the tailored DF-training method. During this transition, a spike in the training loss is observed due to the shift in the training data distribution.  As shown later, ICL pre-training not only accelerates convergence but also improves recognition of clean data and ICL-testing performance.

\begin{figure}[!htbp]
    \centering
    \includegraphics[width=0.45\textwidth]{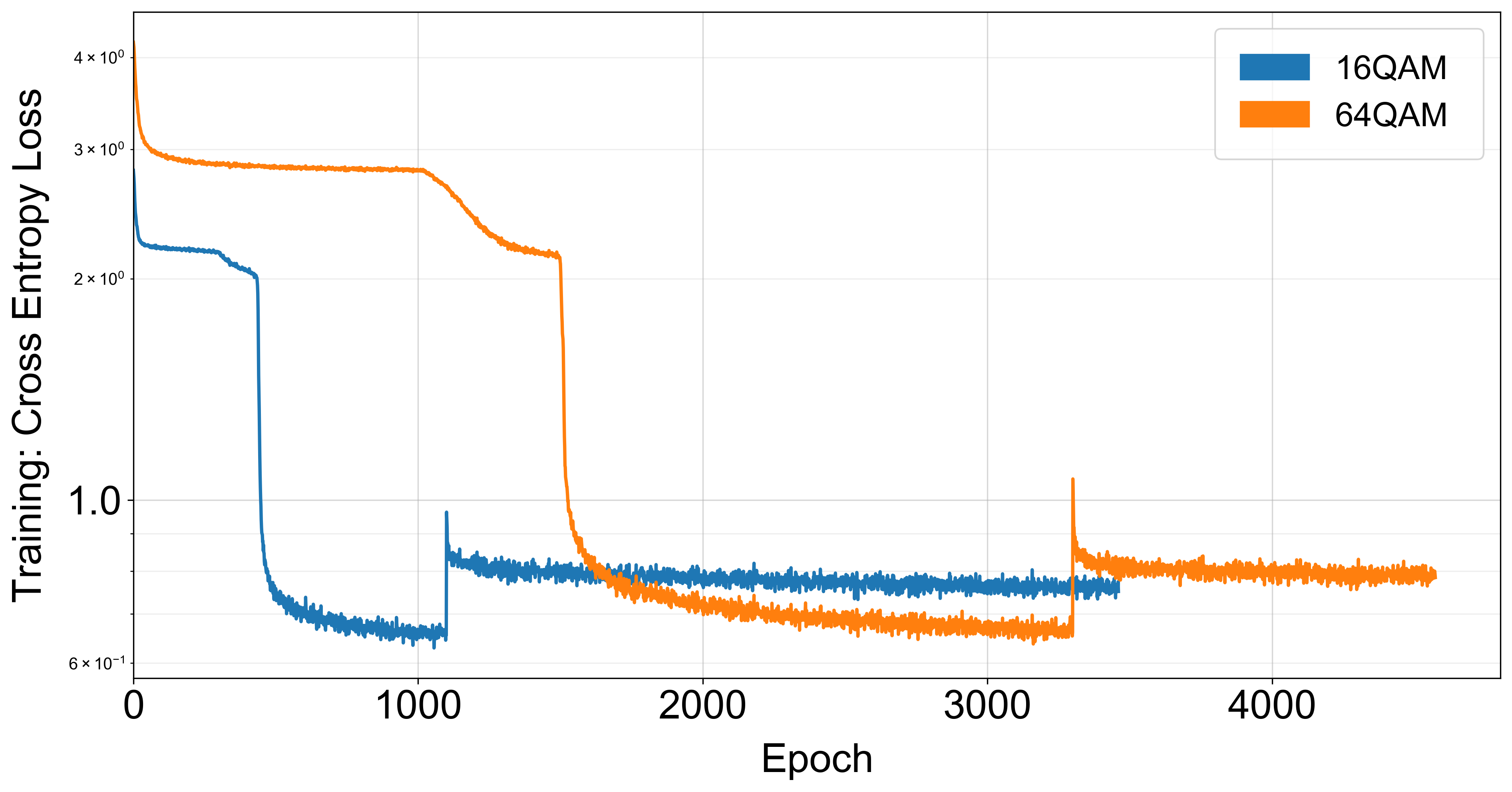}
    \caption{Training process loss evolution. The Transformer is initially trained from scratch using ICL pre-training, followed by tailored DF fine-tuning. The training processes are shown for SISO 16QAM and 64QAM, respectively. In the first phase, the loss exhibits an initial lull where it remains constant for a period, which is typical in ICL training, before decreasing sharply. Prior to reaching a plateau, the transition to tailored DF fine-tuning introduces a spike in loss due to the data distribution shift, followed by a gradual decrease as the model converges.}
    \label{fig:train_loss}
\end{figure}

\subsubsection{Decision Feedback Fine-tuning}
Instead of the vanilla DF-training in the second phase, we employ a carefully designed fine-tuning process. The new loss function is formulated as a \emph{weighted combination} of the previously defined losses in \cref{loss:ICL,loss:DF}, where $\alpha$ controls the balance between them: 
\begin{equation}
\mathcal{L}^{\text{fine-tuning}}(\theta) = \alpha \mathcal{L}^{DF}(\theta) + (1 - \alpha) \mathcal{L}^{ICL}(\theta).
\label{eqn:tunningloss}
\end{equation}
We set hyperparameter $\alpha$ to 0.7 in the experiment. 


Training on both clean and noisy data enhances the robustness of the Transformer model by exposing it to a more diverse dataset. Ultimately, we propose that \textit{a single Transformer model} can be trained to perform both ICL-testing and DF-testing, making our \alg model adaptable for practical wireless systems. For example, in scenarios with sufficient pilot information, the model can operate in the ICL manner. However, in challenging situations -- common in real-world applications -- where pilots are limited and difficult to acquire, the model can utilize previous decisions to improve performance in subsequent symbol detection. {In fact, as we will see in the experimental results, the pre-trained \alg model can handle previously unseen channel fading distribution very well. Remarkably, all of these generalization properties can be achieved with a single pre-trained Transformer.}

\subsection{Curriculum Learning} 

\begin{figure}[!htbp]
    \centering
     \includegraphics[width=0.45\textwidth]{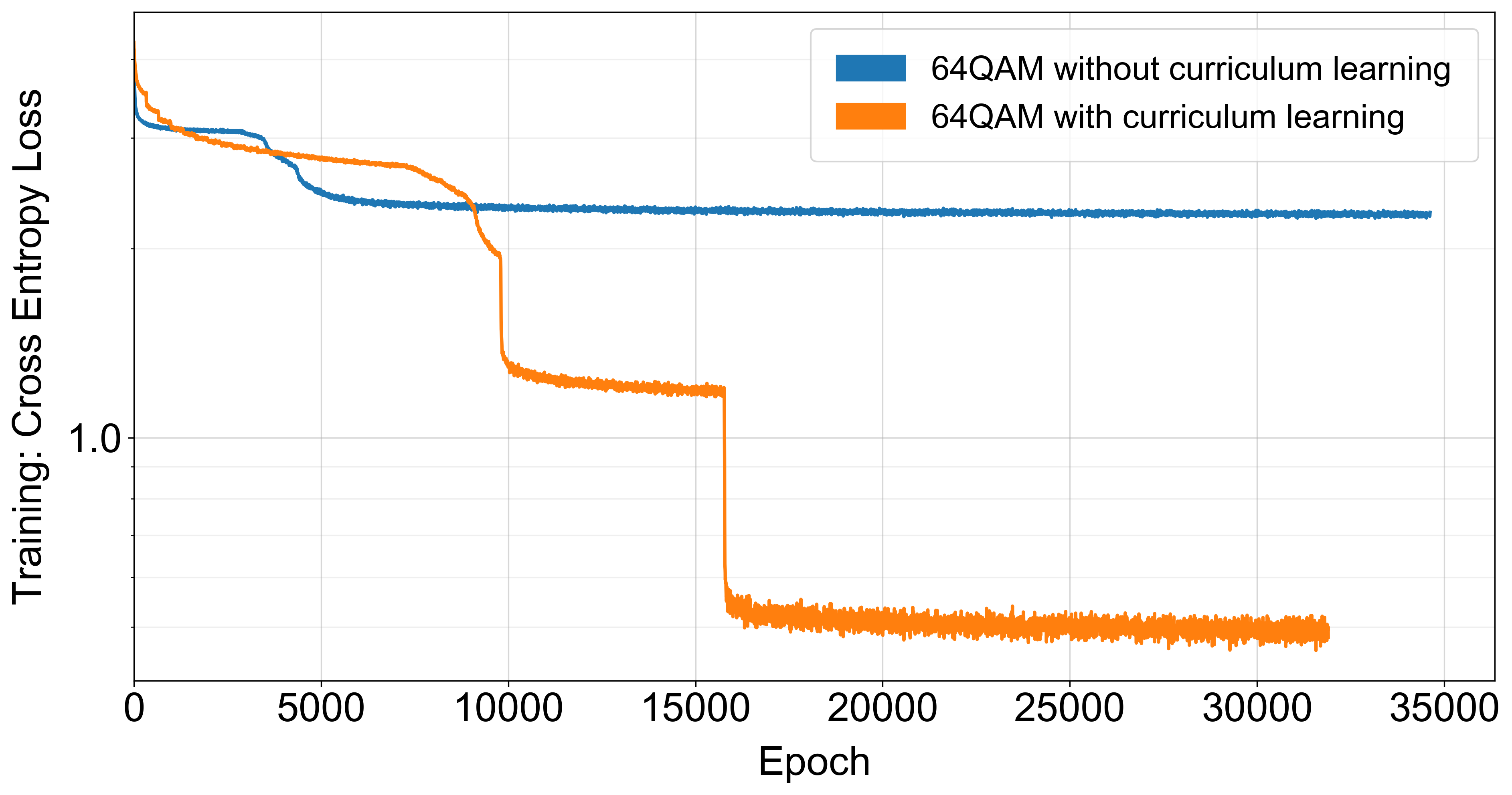} 
    \caption{Training cross-entropy loss for 64QAM in SISO. The model trained without curriculum learning fails to converge, even after an extended lull period. In contrast, the model trained with curriculum learning exhibits a significantly shorter lull period, leading to faster and more stable convergence.} 
    \label{fig:train_Curriculum}
\end{figure}

We further introduce curriculum learning to accelerate the training speed \cite{garg2022can}. The core idea is to gradually increase the task difficulty, similar to a learning curriculum, progressing from simple to more challenging scenarios. In \cite{garg2022can}, this technique was applied to a linear regression problem by gradually increasing the underlying dimension and context length until the target dimension and context length were reached. The authors observed that Transformer training typically experiences an initial lull period, where the loss remains stagnant before eventually dropping sharply. Curriculum learning mitigates this lull period and accelerates convergence.

In our wireless experiments, while the channel dimension remains fixed, we progressively increase the context length every fixed number of epochs until reaching the target length. As shown in \cref{fig:train_Curriculum}, this approach accelerates training by reducing the lull period. Notably, for the challenging 64QAM detection task with limited pilot data and low SNR, the model may fail to converge without curriculum learning.

\section{Theoretical Analysis}
\label{sec:theory}





In this section, we present a theoretical lower bound on the detection error and analyze the generalization behavior of \alg under channel distribution mismatch, particularly when the SNR and Line of Sight (LoS) components differ between training and testing. Our results build upon the techniques developed in \cite{shen2024training}. To simplify the analysis and provide cleaner insight, we focus on the BPSK SISO case, but the results can be extended to higher modulation levels. All proofs can be found in the Appendix.

To ensure a self-contained analysis, we first introduce the necessary definitions and assumptions adapted from \cite{shen2024training}. 
We denote the sigmoid function by $S(x) := 1/(1 + \exp(-x))$. We say a data pair $(y, x) \sim \dis^b(\mu_0, \mu_1, \Lambda)$ if $x \in \{-1, 1\}$ with equal probability and
\[
f(y \mid x = -1) = \mathcal{N}(\mu_0, \Lambda), \quad f(y \mid x = 1) = \mathcal{N}(\mu_1, \Lambda),
\]
where $\mu_0, \mu_1 \in \mathbb{R}^d$, and $\Lambda \in \mathbb{R}^{d \times d}$ is a positive definite covariance matrix.

\begin{assumption}
\label{asp:test_prompt}
For an in-context test prompt $P_{\test} = (y_1, x_1, \dots, y_k, x_k, y_{\query})$, we assume: (1)  $\{(y_i, x_i)\}_{i=1}^k \iid \dis^b(\mu_0, \mu_1, \Lambda)$; (2) $\mu_0^\top \Lambda^{-1} \mu_0 = \mu_1^\top \Lambda^{-1} \mu_1$.
\end{assumption}

We note that \cref{asp:test_prompt} is satisfied for the wireless symbol detection problem. For example, in our BPSK SISO Rayleigh fading setting, the received $y_{\query}$ follows a distribution 
\[
(y, x) \sim \dis^b(\mu_0, \mu_1, \Lambda) = \dis^b(-1, +1, \sigma^2 I),
\]
where $\mu_0 = -1$, $\mu_1 = +1$, and $\Lambda = \sigma^2 I$. 

Following the same parameterization of the one-layer linear Transformer as in \cite{shen2024training}, the model output for a test prompt $P_{\test}$ with weight matrix $W$ is given by
\begin{align}\label{eq: output}
    \widehat{x}_{\out} = S\left( \left( \frac{1}{k} \sum_{i=1}^k \testx_i \testy_i^\top \right) W \testy_{\query} \right),
\end{align}
The scalar output $\widehat{x}_{\out} \in (0,1)$ represents the model’s soft prediction. The predicted label $\widehat{x}_{\query} \in \{-1, 1\}$ is then sampled according to:
$\probp{\widehat{x}_{\query} = 1} = \widehat{x}_{\out}, \probp{\widehat{x}_{\query} = -1} = 1 - \widehat{x}_{\out}.$

\subsection{Convergence}
Following the same training procedure described in Section 3.1 of \cite{shen2024training}, and according to its Theorem 3.3, {a single-layer linear Transformer with parameter $W$ trained via gradient descent and infinitely long prompts with covariance $\Lambda$ can converge to the optimal weight matrix $W^* = 2\Lambda^{-1}$.}
In the matched setting where the training and test distributions share the same covariance $\Lambda$, we have the following result. 

\begin{theorem}
\label{thm: lower bound}
Consider a test prompt $P_{\test} = (y_1, x_1, \dots, y_k, x_k, y_{\query})$ satisfying Assumption~\ref{asp:test_prompt}. 
Let $\widehat{x}_{\query}$ denote the predicted label of the Transformer with {optimal} parameter $W^* = 2\Lambda^{-1}$. Then,
\begin{align*}
&\E\left[ \left| \Prob(\widehat x_{\query}=1 \mid y_{\query}) - \Prob(x_{\query}=1 \mid y_{\query}) \right|^2 \right] \\
=& \frac{1}{k} \left( S'(\mu^\top \Lambda^{-1} q) \right)^2 
\left[\frac{(u^\top \Lambda^{-1} q)^2}{4}  + 4 q^\top \Lambda^{-1} q
\right] +\; o\left(\frac{1}{k}\right),
\end{align*}
where $\mu = \mu_1 - \mu_0$, $u = 2(\mu_1 + \mu_0)$, and $q = y_{\query}$. The expectation is taken over the in-context samples $\{(y_i, x_i)\}_{i=1}^k \iid \dis^b(\mu_0, \mu_1, \Lambda)$.
\end{theorem}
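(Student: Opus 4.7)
The plan is to evaluate both the true posterior and the model's soft output in closed form, observe that they differ only through a sample-average random logit whose mean matches the true logit, and then do a second-order Taylor expansion of the sigmoid around that true logit.

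First I would compute the true posterior under Assumption~\ref{asp:test_prompt}. Writing the Gaussian densities for $f(y_{\query}\mid x_{\query}=\pm 1)$, the quadratic terms $y^\top\Lambda^{-1}y$ cancel, and the assumption $\mu_0^\top\Lambda^{-1}\mu_0=\mu_1^\top\Lambda^{-1}\mu_1$ kills the constant terms as well, leaving
\[
\Prob(x_{\query}=1\mid y_{\query}) \;=\; S(\mu^\top\Lambda^{-1}q),
\]
with $\mu=\mu_1-\mu_0$ and $q=y_{\query}$. Substituting $W^*=2\Lambda^{-1}$ into \eqref{eq: output} gives
\[
\widehat{x}_{\out}=S(\bar Z),\qquad \bar Z:=\tfrac{1}{k}\sum_{i=1}^k Z_i,\qquad Z_i:=2x_iy_i^\top\Lambda^{-1}q.
\]
So $\Prob(\widehat x_{\query}=1\mid y_{\query},P_{\test})=S(\bar Z)$ while the ideal target is $S(a)$ with $a:=\mu^\top\Lambda^{-1}q$, and the whole problem reduces to controlling $\E[(S(\bar Z)-S(a))^2]$ over the i.i.d.\ in-context samples.

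Next I would verify that the mean of each summand matches the true logit. Using $\E[y_i\mid x_i=\pm 1]=\mu_{(x_i+1)/2}$ and the equal-prior law of $x_i$, a direct computation gives $\E[x_iy_i]=\mu/2$, hence $\E[Z_i]=a$ and $\E[\bar Z]=a$. Then, by a second-order Taylor expansion of $S$ at $a$ together with the mean-value remainder (using that $S',S''$ are uniformly bounded),
\[
(S(\bar Z)-S(a))^2 \;=\; (S'(a))^2(\bar Z-a)^2 \;+\; R_k,
\]
with $|R_k|\le C|\bar Z-a|^3+C|\bar Z-a|^4$. Since $\bar Z-a$ is an average of i.i.d., mean-zero, bounded-variance random variables, $\E|\bar Z-a|^3=O(k^{-3/2})$ and $\E|\bar Z-a|^4=O(k^{-2})$, so $\E[R_k]=o(1/k)$. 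Therefore
\[
\E[(S(\bar Z)-S(a))^2]=\tfrac{1}{k}(S'(a))^2\,\Var(Z_1)+o(1/k).
\]

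The last step, and the only place real calculation is needed, is to evaluate $\Var(Z_1)$. Writing $v:=\Lambda^{-1}q$, one has $\E[Z_1^2]=4\,v^\top\E[y_1y_1^\top]v=4\,q^\top\Lambda^{-1}q+2((v^\top\mu_0)^2+(v^\top\mu_1)^2)$, since $\E[y_1y_1^\top]=\Lambda+\tfrac12(\mu_0\mu_0^\top+\mu_1\mu_1^\top)$ and $v^\top\Lambda v=q^\top\Lambda^{-1}q$. Subtracting $(\E Z_1)^2=(v^\top\mu)^2$ and using the identity $2(a_0^2+a_1^2)-(a_1-a_0)^2=(a_0+a_1)^2$ collapses the mean-dependent piece to $(v^\top(\mu_0+\mu_1))^2=(u^\top\Lambda^{-1}q)^2/4$, yielding
\[
\Var(Z_1)=\tfrac{(u^\top\Lambda^{-1}q)^2}{4}+4\,q^\top\Lambda^{-1}q,
\]
which is exactly the bracketed factor in the statement. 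Combining everything gives the claimed expression.

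The only mildly delicate step is justifying that the Taylor remainder contributes $o(1/k)$; since $S$ has globally bounded derivatives of all orders and $Z_i$ has finite fourth moment, this follows from standard moment bounds on i.i.d.\ averages, so the main substance of the proof is really the variance calculation rather than the approximation argument.
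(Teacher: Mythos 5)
Your proposal is correct and follows essentially the same route as the paper's proof: both reduce the problem to a second-order Taylor expansion of $S$ about the true logit $a=\mu^\top\Lambda^{-1}q$, show the remainder is $o(1/k)$ via third- and fourth-moment bounds, and identify the leading term as $(S'(a))^2$ times the variance of the empirical logit. The only cosmetic difference is that you compute $\Var(Z_1)$ directly for a single in-context sample, whereas the paper decomposes the averaged logit as $\mu+hu+g$ with a binomial label-count term $h$ and a Gaussian noise term $g$; both yield the same bracketed factor.
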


\begin{remark}
\label{remark1}
During testing, the prediction error of a well-trained Transformer is lower bounded at the order of $O(1/k)$. If the test prompt is also of infinite length, the Transformer can asymptotically achieve perfect prediction. Notably, Theorem~\ref{thm: lower bound} does not impose strong distributional assumptions on $\mu_0$ and $\mu_1$, which highlights the strong generalization ability of the trained Transformer.
\end{remark}

\begin{remark}
Since this lower bound is derived under the assumption of ground-truth in-context samples, it also serves as a lower bound for \alg, where the prompt consists of a limited number of pilot pairs and decision feedback (pseudo-labeled) symbols. In high-SNR scenarios, decision feedback is correct with high probability, enabling \alg to closely approach this theoretical bound. However, the presence of noisy feedback complicates the derivation of a tight upper bound. A rigorous theoretical analysis of Transformer-based in-context learning with decision feedback remains an open and compelling direction for future research.
\end{remark}

\subsection{Generalization}
In the following, we rigorously prove that a single-layer Transformer trained at a specific SNR level can be deployed in the same Rayleigh fading channel under different SNR conditions. This result demonstrates the inherent robustness of the Transformer, which we further validate through experimental results presented in the following sections.

\begin{theorem}
\label{thm:mismatch}
Suppose a test prompt $P_{\test} = (y_1, x_1, \dots, y_k, x_k, y_{\query})$ satisfies Assumption~\ref{asp:test_prompt} with covariance $\Lambda = \sigma^2 I$, i.e., $\{(y_i, x_i)\}_{i=1}^k \iid \dis^b(\mu_0, \mu_1, \sigma^2 I)$. 
Now suppose the Transformer is trained on data with covariance $\Lambda' = \xi^2 I$, resulting in a learned {optimal} weight matrix $W^* = 2\xi^{-2} I$. Then the Transformer output is
$\widehat{x}_{\out} = S\left( \xi^{-2} \left( \frac{2}{k} \sum_{i=1}^k \testy_i \testx_i^\top \right) \testy_{\query} \right),$
and the {argmax} predicted label is given by
$\widehat{x}_p =
\begin{cases}
1, & \text{if } \widehat{x}_{\out} > 1/2, \\
-1, & \text{otherwise}.
\end{cases}$
Then, as $k \to \infty$, we have that $\widehat{x}_p$ converges to the optimal {predictor that maximize the accuracy between $\widehat{x}_p$ and $x_\query$}.
\end{theorem}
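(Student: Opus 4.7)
The plan is to combine two elementary ingredients: the Law of Large Numbers applied to the in-context empirical sum, and a direct computation of the Bayes-optimal decision rule for the test distribution. The crucial structural observation is that in \eqref{eq: output} the training--test mismatch enters only through the strictly positive scalar $\xi^{-2}$, so although it can distort the soft output $\widehat{x}_{\out}$, it cannot change the sign of the sigmoid's argument and hence cannot alter the hard label obtained by thresholding at $1/2$. This positive-scalar invariance is the entire source of the generalization claim.

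For the LLN step, I would first compute $\E[x y^\top]$ under $\dis^b(\mu_0, \mu_1, \sigma^2 I)$: conditioning on $x \in \{\pm 1\}$ with equal probability gives $\E[x y^\top] = \tfrac{1}{2}(\mu_1^\top - \mu_0^\top) = \tfrac{1}{2}\mu^\top$, where $\mu = \mu_1 - \mu_0$. The strong LLN then yields $\tfrac{1}{k}\sum_{i=1}^k x_i y_i^\top \to \tfrac{1}{2}\mu^\top$ almost surely, and by continuity of $S$,
$$\widehat{x}_{\out} \xrightarrow{\mathrm{a.s.}} S\!\left(\xi^{-2}\, \mu^\top y_{\query}\right).$$
Since $\xi^{-2} > 0$ and $S$ is strictly increasing with $S(0) = 1/2$, the hard prediction satisfies $\widehat{x}_p \to \sgn(\mu^\top y_{\query})$ almost surely, off the null event $\{\mu^\top y_{\query} = 0\}$. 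Next I would identify the Bayes-optimal rule: since $x_{\query}$ is uniform on $\{\pm 1\}$, the accuracy-maximizer is the likelihood-ratio test $\arg\max_{x \in \{\pm 1\}} f(y_{\query} \mid x_{\query} = x)$, which for $\mathcal{N}(\mu_c, \sigma^2 I)$ conditionals reduces to the condition $2\mu^\top y_{\query} > \|\mu_1\|^2 - \|\mu_0\|^2$. The second part of Assumption~\ref{asp:test_prompt}, with $\Lambda = \sigma^2 I$, forces $\|\mu_1\|^2 = \|\mu_0\|^2$, so the Bayes rule collapses to $\sgn(\mu^\top y_{\query})$, matching the limit obtained above.

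The proof requires no delicate concentration: one application of the LLN and one likelihood-ratio computation suffice. The only subtlety is justifying that the measure-zero decision boundary $\{\mu^\top y_{\query} = 0\}$ may be ignored, which follows from absolute continuity of the Gaussian-mixture law of $y_{\query}$ whenever $\mu \ne 0$. Rather than any technical hurdle, the conceptually interesting point is that while the soft posterior estimate $\widehat{x}_{\out}$ is miscalibrated when $\xi \ne \sigma$, the induced hard decision remains Bayes-optimal in the limit $k \to \infty$. This invariance of the $\arg\max$ under positive rescaling of a correctly-aligned sufficient statistic is exactly the mathematical mechanism that explains the empirical SNR-generalization behavior of \alg reported in the experiments.
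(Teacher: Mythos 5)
Your proof is correct and follows essentially the same route as the paper's: the in-context average concentrates to $\mu=\mu_1-\mu_0$ (you via the LLN, the paper via its Theorem~\ref{thm: lower bound}), the positive scalar $\xi^{-2}$ cannot flip the sign of the sigmoid's argument, and the second condition of Assumption~\ref{asp:test_prompt} makes the Bayes/accuracy-optimal rule reduce to $\sgn(\mu^\top y_{\query})$. Your explicit handling of the measure-zero decision boundary is a small refinement the paper omits, but the argument is otherwise the same.
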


\begin{remark}
\cref{thm:mismatch} demonstrates that a Transformer trained under one noise level ($\xi^2$) can generalize effectively to test-time noise levels with (1) different $\sigma^2$, and (2) different $\mu_0$, $\mu_1$. 
In a Rayleigh or Ricean fading channel, the optimal predictor that maximizes the accuracy between $\widehat{x}_p$ and $x_\query$ reduces to predicting based on the sign of $(\mu_1 - \mu_0)^\top x_\query$. Under SNR mismatch and LoS mismatch, only a multiplicative constant changes, while the form of the estimator remains unchanged. 
Furthermore, we will see in \cref{sec:exp_mismatch} that empirical experiments corroborate these theoretical findings. 
\end{remark}


\section{Experiment}

\begin{figure*}[!htbp]
    \centering
    \subfloat[\centering \footnotesize BPSK, SNR = 15 dB, 1 or 2 Pilots, $\text{P1 gain}_{\text{DF}} =22.8\%$.]{%
        \includegraphics[width=0.45\textwidth]{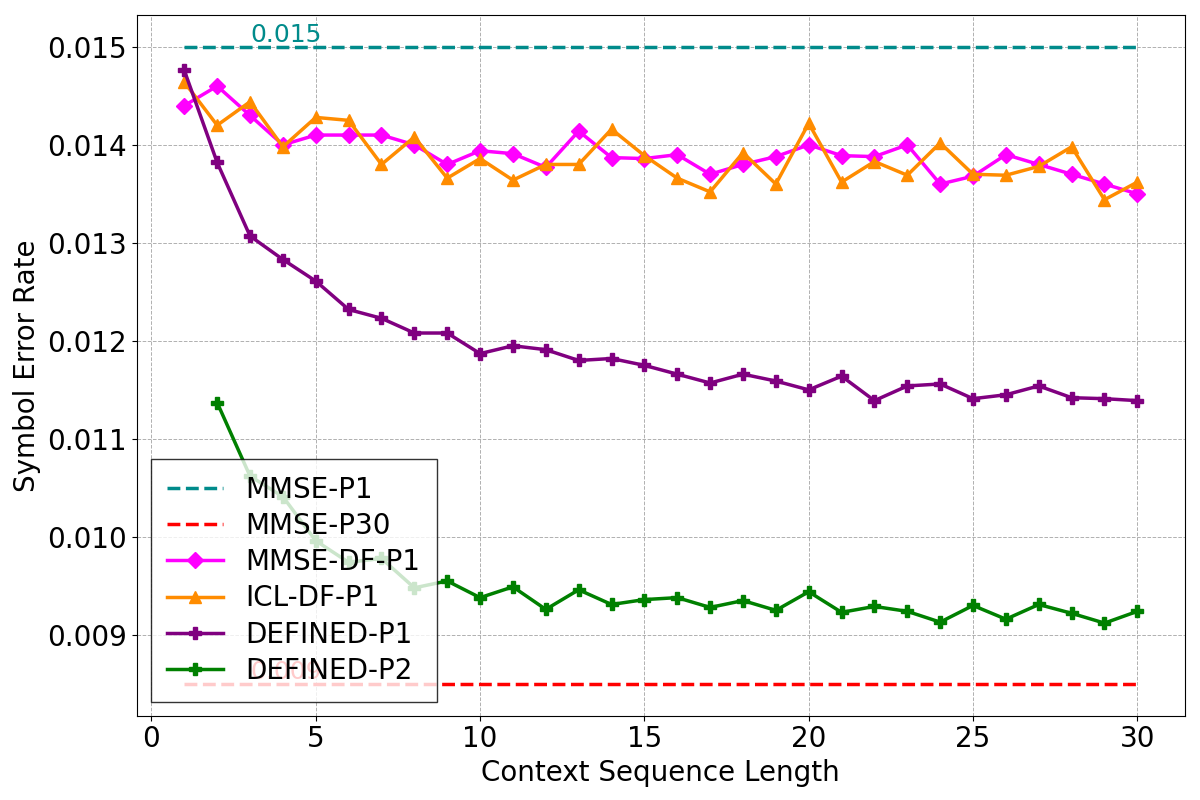}%
        \label{fig:SISO_BPSK_SNR15_P1}%
    }%
    \subfloat[\centering \footnotesize QPSK, SNR = 20 dB, 1 or 2 Pilots, $\text{P1 gain}_{\text{DF}} = 19.3\%$.]{%
        \includegraphics[width=0.45\textwidth]{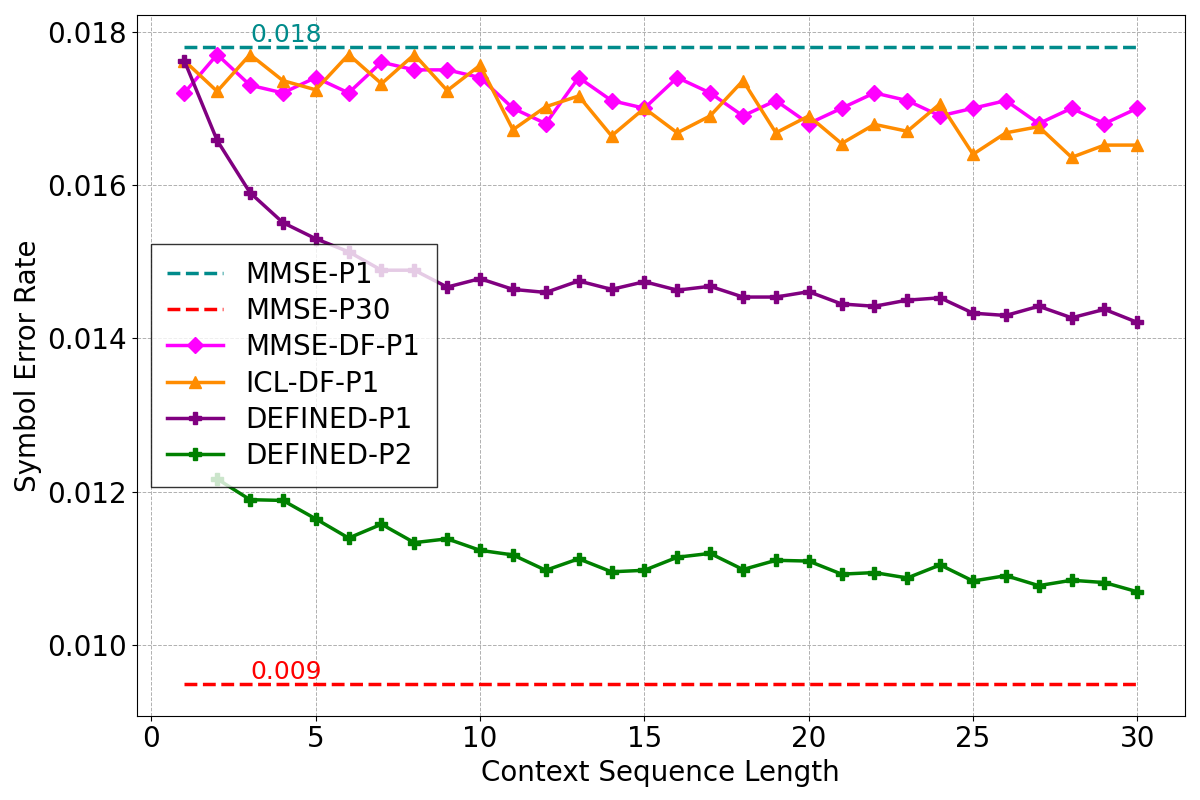}%
        \label{fig:SISO_QPSK_SNR20_P1}%
     }%
    \hfill
    \subfloat[\centering \footnotesize 16QAM, SNR = 30 dB, 1 or 2 Pilots, $\text{P1 gain}_{\text{DF}} = 55.3\%$.]{%
        \includegraphics[width=0.45\textwidth]{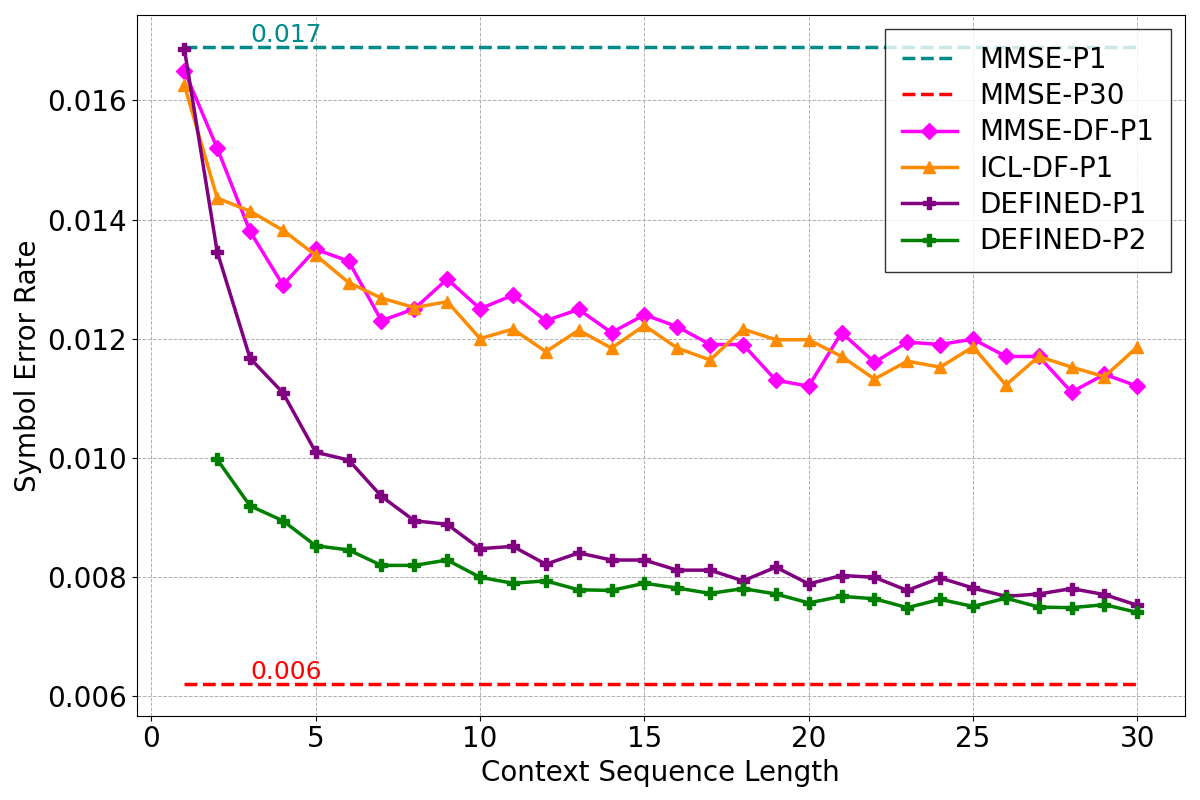}%
        \label{fig:SISO_16QAM_SNR30_P1}%
    }%
    \subfloat[\centering \footnotesize 64QAM, SNR = 35 dB, 1 or 2 Pilots, $\text{P1 gain}_{\text{DF}} = 62.6\%$.]{%
        \includegraphics[width=0.45\textwidth]{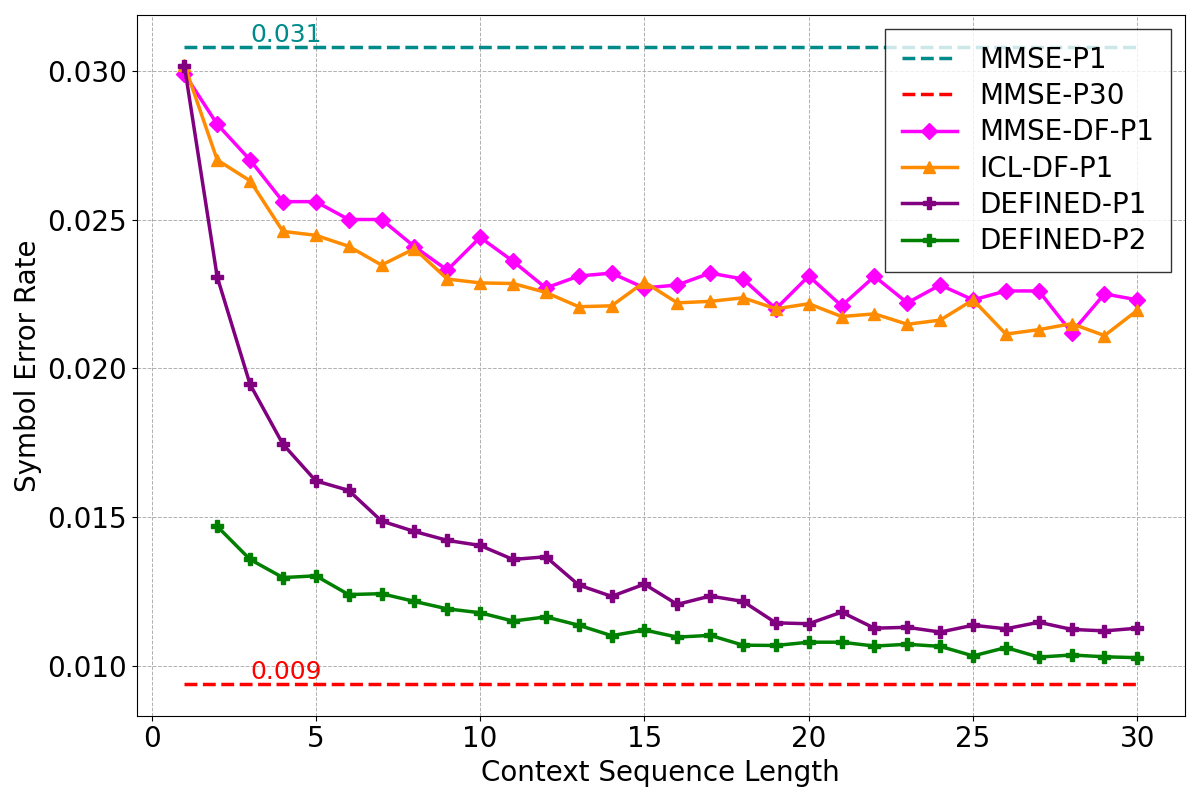}%
        \label{fig:SISO_64QAM_SNR35_P1}%
    }%
    \caption{SER as a function of context sequence length for SISO systems with BPSK, QPSK, 16QAM, and 64QAM under varying SNRs and pilot lengths.}
    \label{fig:DEFINE_SISO}
\end{figure*}

\label{sec:expriment}
We present the experimental results and analyze the performances of {\alg} in comparison with baseline algorithms. {The results suggest that} our model achieves superior performance not only with sufficient pilot data but also exhibits notable improvements in limited-data scenarios by effectively leveraging noisy feedback. Furthermore, \alg demonstrates robust performance in complex modulation tasks as well as mismatched channel distributions. 

\subsection{Baseline Algorithms}
We first describe several baseline algorithms against which \alg will be compared.

\subsubsection{In-Context Learning}
We train a Transformer from scratch using vanilla ICL-training and {evaluate the SER for both ICL-testing and DF-testing, shown as ``ICL-ICL'' and ``ICL-DF'' curves in the figures, respectively. Note that ICL-ICL does not utilize any decision feedback, while ICL-DF has a mismatch since the training phase does not have any decision-feedback data, but the testing phase utilizes such feedback.}

\subsubsection{MMSE Algorithm}
This is the canonical coherent detection algorithm, which first estimates the channel using pilots and then performs detections for the subsequent received signals using the estimated channel. Assuming a known pilot signal matrix $X = [x_1, x_2,\cdots, x_k]$, the received signal matrix is represented as $Y = HX+Z =  [y_1, y_2,\cdots, y_k]$.  {In the case of Rayleigh fading,} both the channel and noise follow complex Gaussian distributions, and the pair $(H, Y)$ is jointly Gaussian. The MMSE estimator for $H$ can be derived as: $\hat{H}_{k}^{\text{MMSE}} =  Y X^{\dagger} (X X^{\dagger} + \sigma^2 I)^{-1}$ for the i.i.d. fading case. 
Note that when the channel fading is not Gaussian, this is the linear MMSE (LMMSE) channel estimator. Then, with the $t$-th received signal $y_{t}$, the transmitted symbol $x_{t}$ is estimated by projection onto the closest symbol in $\mathcal{X}$:
\begin{equation}
\hat{x}_{t} = \arg\min_{x \in \mathcal{X}} \| \hat{H}_{k}^{\text{MMSE}} x - y_{t} \|^2, \forall t = k+1, \cdots, T.
\label{eqn:projection}
\end{equation}
With $k$ pilot pairs, the mean SER is computed and shown as a horizontal line labeled ``MMSE-P$k$''.

\subsubsection{MMSE-DF Algorithm}

MMSE-DF is an extension of the previous MMSE solution by sequentially using the decision feedback data as if they were new pilot pairs. This method belongs to decision-directed (or data-aided) channel estimation \cite{deng2003decision,karami2006decision}.  Starting with $k$ pilots, we compute the MMSE estimator of $H$ and detect $\hat{x}_{k+1}$ using $y_{k+1}$, as described in \cref{eqn:projection}. The decision pair $(y_{k+1}, \hat{x}_{k+1})$ is merged with the existing dataset and iteratively used to detect each subsequent signal until $\hat{x}_{T}$. The SER is plotted against the decision feedback-extended context sequence length.

\subsection{Experimental Results}
During testing, we randomly sample 80,000 prompts to compute the average SER across tasks. 
Channel realizations follow i.i.d. Rayleigh distributions in both training and testing (except in \Cref{sec:exp_mismatch}), with the difference being that in the training phase, the dataset has different SNRs as described in \cref{sec:data_generation}. Results are presented for BPSK, QPSK, 16QAM, and 64QAM in the SISO system, and for QPSK in a $2 \times 2$ MIMO system. We note that for methods that do not utilize feedback (such as ICL-ICL), the SER is plotted as a function of the \emph{pilot} sequence length to evaluate the impact of different numbers of pilots. For methods that utilize decision feedback (such as ICL-DF, MMSE-DF, and \alg), SER is plotted as a function of the \emph{context} sequence length. In this case, we emphasize that the adopted number of pilots is fixed, and we increase the context length by iteratively adding detected symbols. Specifically, the $t$-th point represents the SER of the estimator $\hat{x}_{t+1}$ using $t$ contexts, omitting the 0-th point. A glossary summarizing key training and testing configurations is provided in \cref{tab:glossary} for clarity.


To quantify the SER improvement with increasing context length under DF-testing, we define the gain metric as:
\begin{equation} \text{gain}_{\text{DF}} = \left( \frac{\text{SER}_{k}(\theta) - \text{SER}_{T-1}(\theta)}{\text{SER}_{k}(\theta)} \right) \times 100 \%, 
\end{equation}
which represents the percentage reduction in SER as the context length increases from $k$ to $(T-1)$, starting from $k$ known pilots.

\begin{table}[htbp]
    \centering
    \caption{Glossary of Training and Testing Methods}
    \label{tab:glossary}
    \begin{tabular}{ll}
        \toprule
        \textbf{Term} & \textbf{Description} \\
        \midrule
        \textbf{ICL-training} & Training with only clean pilot sequences. \\
        \textbf{DF-training} & Training with decision feedback sequences. \\
        \textbf{ICL-testing} & Testing with clean pilots only. \\
        \textbf{DF-testing} & Testing with decision feedback prompts. \\
        \midrule
        \textbf{Pilot Sequence Length} & Number of pilots. \\
        \textbf{Context Sequence Length} & Total sequence length, including feedback. \\
        \midrule
        \textbf{ICL (ICL-ICL)} & ICL-trained, ICL-tested (no feedback). \\
        \textbf{ICL-DF} & ICL-trained, DF-tested (mismatch). \\
        \textbf{DEFINED (DEFINED-DF)} & DF-trained, DF-tested. \\
        \textbf{DEFINED-ICL} & DF-trained, ICL-tested. \\
        \midrule
        \textbf{MMSE-P$k$} & MMSE with $k$ pilots (baseline). \\
        \textbf{MMSE-DF} & MMSE with decision feedback. \\
        \bottomrule
    \end{tabular}
\end{table}

\subsubsection{Comparison with Baseline Algorithms under DF-testing}

We first compare \alg with the baseline algorithms under DF-testing in a variety of scenarios, including some with the most \emph{extremely limited} pilot data: a \textbf{single pilot} in the SISO system. 
The results are presented in \cref{fig:DEFINE_SISO}. 


The \alg lines represent the performance of our proposed model during DF-testing, demonstrating a significant reduction in SER
as more decision feedback data is incorporated, i.e., increased context sequence lengths. 
For instance, in \cref{fig:SISO_16QAM_SNR30_P1}, for 16QAM with a single pilot, the SER decreases from 0.0168 to 0.0075 as the feedback context sequence length increases, achieving a $55.3\%$ SER gain. Remarkably, our model achieves performance comparable to conventional methods that require more than $4$ pilots while \alg using only a single pilot.
This confirms that \alg effectively utilizes noisy feedback to improve detection performance with limited pilot data. Furthermore, we note that increasing the amount of pilots from 1 to 2 would improve the detection accuracy for the early detected symbols due to more clean labels. As the context length increases, the performance gradually matches that of 1-pilot since the total (clean and pseudo) labels are roughly the same for both cases.

On the other hand, ICL-DF, which represents a Transformer trained with ICL method but deployed using decision feedback as if the detected symbols are pilots, performs significantly worse than \alg, nearly coinciding with MMSE-DF. This observation highlights that models trained solely on clean data struggle with noisy feedback, as they simply treat all feedback data as clean. This underscores the importance of DF fine-tuning to effectively handle noisy feedback.

\subsubsection{Comparison with Baseline Algorithms Without Decision Feedback}

We then compare \alg with baseline algorithms under ICL-testing (i.e., no DF). Results are reported in \cref{fig:ICL_results}.
The ICL-ICL result, representing the model trained and tested using the ICL method, demonstrates that with 30 pilots Transformer slightly outperforms the MMSE algorithm with 30 pilots during ICL testing. This improvement arises from the model’s ability to jointly perform channel estimation and symbol detection. 

Importantly, \alg also performs well in ICL testing, as indicated by the DEFINED-ICL result, which closely aligns with the ICL-ICL line. This suggests that ICL pre-training, followed by carefully designed loss functions during decision feedback fine-tuning, empowers the model to effectively learn from clean data. Together with the results from DF-testing, we conclude that \textbf{a single pre-trained Transformer can perform both DF and no-DF tasks}. Our \alg model adapts well to real-world symbol detection scenarios, excelling with ample pilot data and maintaining strong performance even with a single pilot.

\begin{figure}[!htbp]
    \centering
    \includegraphics[width=0.45\textwidth]{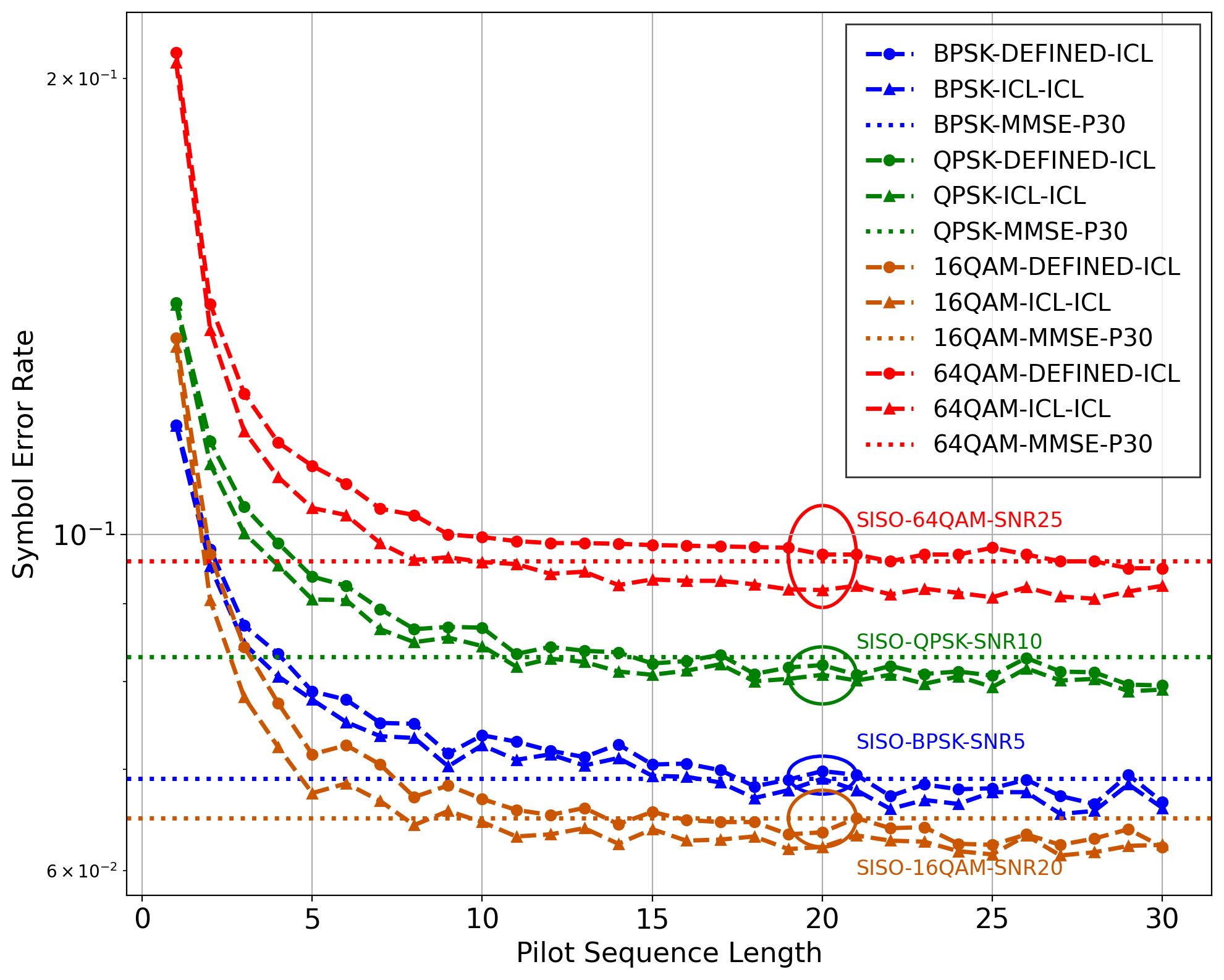}
    \caption{\footnotesize SER as a function of pilot sequence length under ICL-testing conditions, comparing \alg\ and vanilla ICL models in SISO systems across various modulation schemes.}
    \label{fig:ICL_results}
\end{figure}


\subsubsection{Comparison with Different SNRs and Varying Pilot Lengths}
 At high SNR levels, reduced data noise enables more accurate detections from pilot data, enhancing \alg model performance and accentuating the downward SER trend. However, at very high SNRs, the already low initial SER limits further improvement. Our \alg model still performs robustly with minimal pilot data, including the extreme single-pilot cases. As pilot length increases, as shown in \cref{fig:DEFINE_SISO}, all algorithms exhibit improved performance in DF inference.

\subsubsection{Comparison with Different Modulation Schemes}

We conduct experiments using BPSK, QPSK, 16QAM, and 64QAM modulations, which correspond to classification tasks with 2, 4, 16, and 64 classes, respectively. As modulation complexity increases, the detection task becomes more challenging. However, somewhat counterintuitively, we observe larger performance gains with \alg: the SER decreases more sharply with additional feedback data (see Figs.~\ref{fig:DEFINE_SISO}).

Analysis of the Transformer’s output logit vector shows that nearly all of the incorrect detections occur within a small region around the ground-truth label in the constellation. This ``typical error event'' \cite{TV:05} suggests that even incorrect detections carry valuable information, \textit{as the noisy label is often very close to the correct one, which still provides reasonably accurate channel information}. Thus, as modulation complexity increases, the compact constellation set allows noisy feedback to provide more useful information, leading to better SER gains with added feedback data.

These findings demonstrate that \alg effectively captures the constellation set's geometry. It not only learns detections but also recognizes relationships between classes, often assigning higher probabilities to neighboring labels when errors occur. Due to inherent data noise, such as channel fading and additive noise, received signals with nearby latent labels in the constellation may overlap. As a result, the model sees adjacent labels as close neighbors, letting its detections retain valuable information, even when they are not entirely accurate.


\begin{figure}[!htbp]
    \centering
    \includegraphics[width=0.45\textwidth]{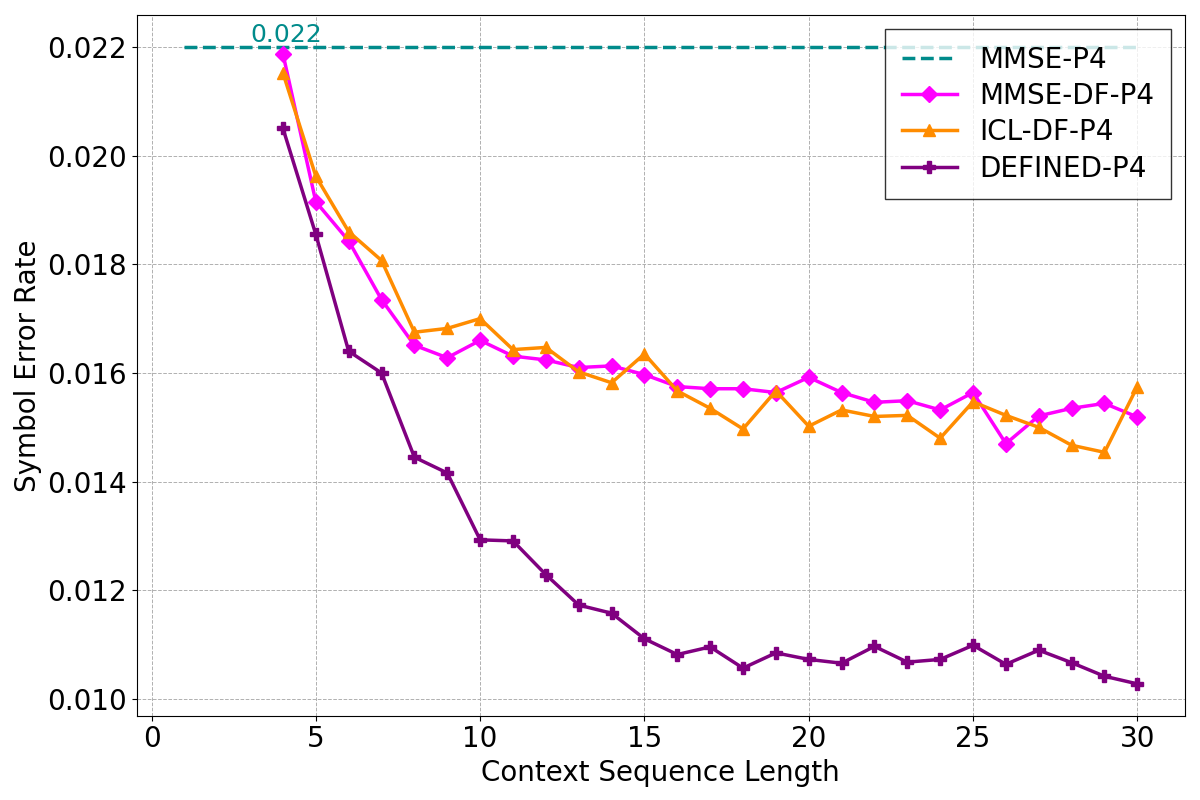}
    \caption{\footnotesize SER as a function of context sequence length for MIMO systems with QPSK at SNR = 15 dB and 4 pilots ($\text{gain}_{\text{DF}} = 50.2\%$).}
    \label{fig:MIMO_QPSK_SNR15_P4}
\end{figure}

\subsubsection{MIMO Results}

The results for MIMO systems with QPSK using 4 pilots are given in \cref{fig:MIMO_QPSK_SNR15_P4}. Compared across different modulation schemes, SNR levels, \alg can still perform well for multi-dimensional detection problems. In fact, \alg exhibits a more pronounced decrease in SER with the inclusion of additional decision feedback data, leveraging the underlying communication system structure more effectively. 


\subsubsection{Robustness to Channel Distribution Mismatch}
\label{sec:exp_mismatch}
As mentioned, handling training and testing channel distribution mismatch \emph{without ML model update} is one of the most important benefits of ICL-based symbol detection, and we validate this feature in this subsection.  

Specifically, we consider a \alg model trained on a SISO 64QAM system with a single pilot and SNRs uniformly sampled from [30, 40] dB, consistent with the previous experiment. 
First, the model is trained on a Rayleigh fading channel without a line-of-sight (LOS) component: $H_{\text{train}} \sim \mathcal{CN}(0,I)$, and tested on a Rician fading channel with an LOS component at SNR = 25 dB:
$H_{\text{test}} \sim \sqrt{\frac{\kappa}{\kappa+1}} e^{j\theta} + \sqrt{\frac{1}{\kappa+1}} \mathcal{CN}(0, I), \label{eqn:Rician}$
where $\kappa = 4$ is the Ricean factor modeling suburban or rural environments \cite{TV:05}, 
and $\theta$ is uniformly sampled to reflect varying LOS conditions. As shown in \cref{fig:OOD_64QAM_Rician_SNR25_P1}, \alg remains robust despite the channel mismatch, benefiting from the additional LOS component, which reduces the SER from $0.118$ to $0.038$. This result also validates the theoretical insight in \cref{remark1}, that the Transformer exhibits inherent robustness to distributional shifts in $\mu_0$ and $\mu_1$, here manifested by the LOS perturbation in the Rician channel.

Second, we evaluate a more challenging scenario where the model is tested in a Rayleigh fading channel with two pilots at SNR = 25 dB (which is below the range of SNRs in the training), introducing increased noise and pilot mismatch. As shown in \cref{fig:OOD_64QAM_SNR25_P2}, \alg achieves a $23.2\%$ SER gain, reducing the SER from $0.138$ to $0.106$. In this extremely noisy condition, MMSE-P30, which uses 30 pilots, achieves an SER of $0.096$. This highlights the adaptability of \alg to varying pilots and noise levels, demonstrating its robustness under adverse conditions.

\begin{figure}[!htbp]
    \centering
    \subfloat[\centering \footnotesize SISO Rician 64QAM, SNR = 25 dB, 1 Pilot, $\text{gain}_{\text{DF}} = 67.8\%$.]{%
        \includegraphics[width=0.45\textwidth]{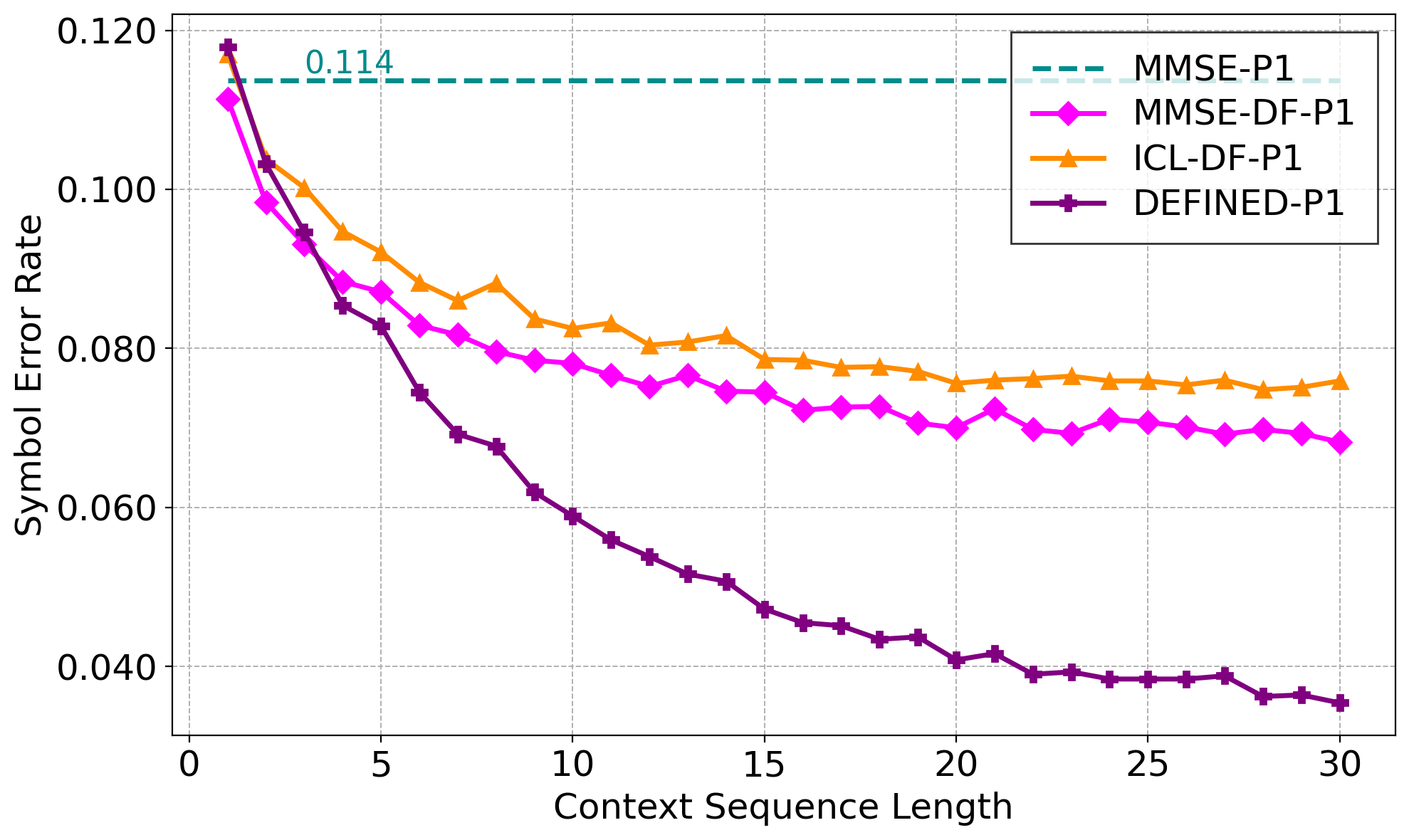}%
        \label{fig:OOD_64QAM_Rician_SNR25_P1}%
    }%
    \qquad
    \subfloat[\centering \footnotesize SISO Rayleigh 64QAM, SNR = 25 dB, 2 Pilots, $\text{gain}_{\text{DF}} = 23.2\%$.]{%
        \includegraphics[width=0.45\textwidth]{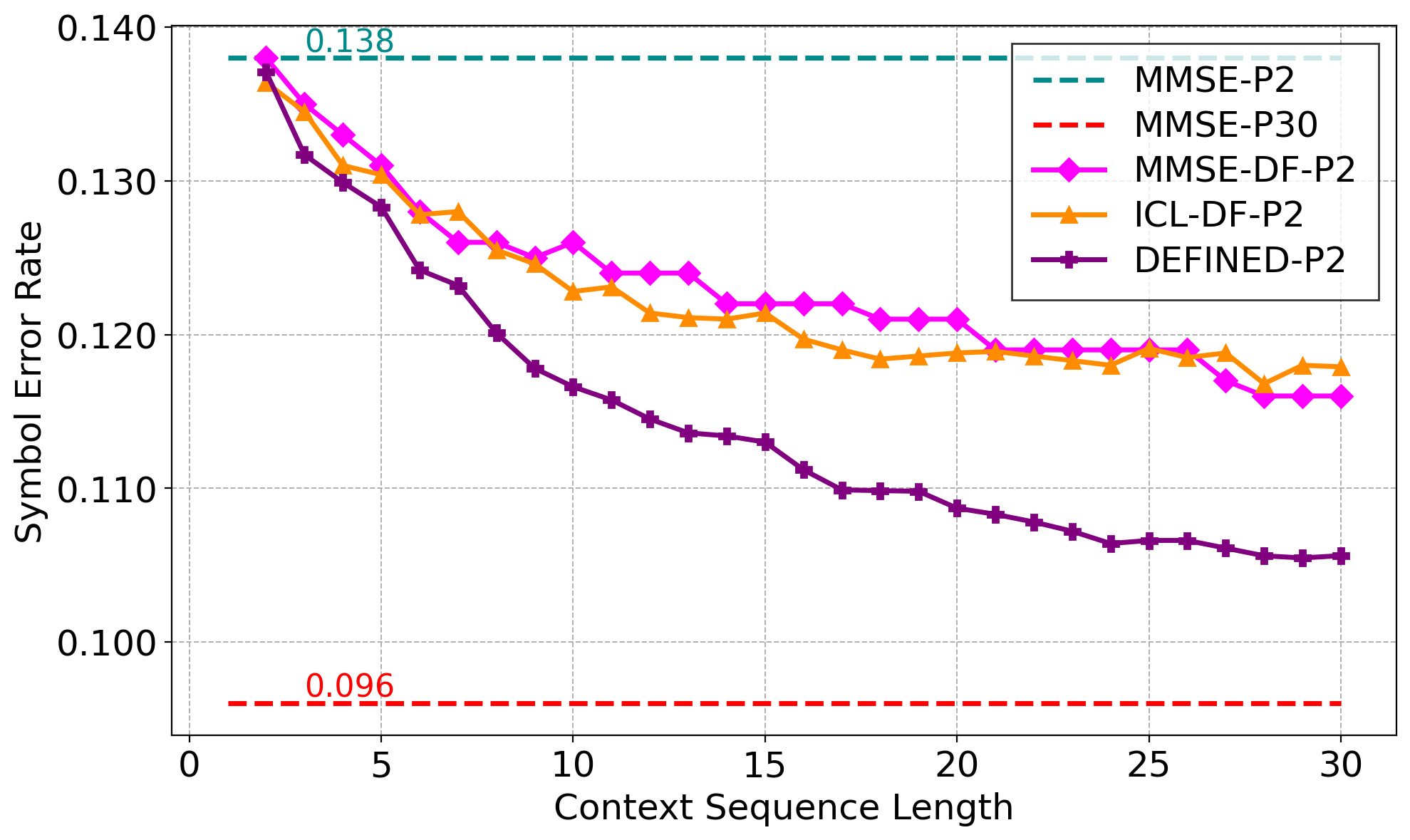}%
        \label{fig:OOD_64QAM_SNR25_P2}%
    }%
    \caption{Performance under channel distribution mismatch.}
    \label{fig:OOD_exp}
\end{figure}

\subsection{Comparison with Non-Coherent Detection}

Here, we compare \alg with a non-coherent detection algorithm, which also does not explicitly perform channel estimation.

\subsubsection{Maximum Likelihood Sequence Detection Algorithm}
\label{section:MLalg}
We consider a non-coherent detection algorithm \cite{papailiopoulos2013maximum} for the block fading channel in the SISO system, which does not require explicit channel estimation. Maximum Likelihood sequence detection (MLSD) is optimal in terms of minimizing the \emph{sequence} error rate on the entire received signal sequence. Given the received signal sequence $Y = [y_1, y_2, \cdots, y_T]$, the MLSD outputs the sequence $S = [s_1, s_2, \cdots, s_T]$ that maximizes the conditional probability density function of $Y$ given $S$. The optimal decision is given by: $S^{\text{MLSD}} = \arg\max_{S\in \mathcal{X}^{T}} f(Y|S)$, 
where $f(Y|S)$ represents the probability density function (PDF) of the channel output conditioned on the transmitted symbol sequence. We consider the Rayleigh block fading channel $H \sim \mathcal{CN}(0,1)$ and channel noise $Z= [z_1,z_2\cdots,z_T]$ with zero mean and covariance matrix $\sigma^2 I$. Thus, $S^{\text{MLSD}}$ can be written as \cite{papailiopoulos2013maximum}:
\begin{equation}
S^{\text{MLSD}} 
= \arg \max_{S \in \mathcal{X}^T} \bigg\{ 
\frac{ \left| S Y^{\dagger} \right|^2 }{\sigma^2 \|S\|^2 + \sigma^4 } 
 - \ln \left( \|S\|^2 + \sigma^2 \right) \bigg\}.
\end{equation}

Two forms of ambiguity exist in this non-coherent detection problem. The first is \emph{phase ambiguity}, which occurs for any constellation invariant under a specific phase rotation. The second is \emph{divisor ambiguity}, which arises when multiple points lie on the same one-dimensional subspace \cite{ryan2006blind}. For SISO scenarios with BPSK and QPSK, since all constellation points lie on the unit circle, the second ambiguity does not exist. The first ambiguity can be resolved by using the phase of the last symbol from the previous block, as described in \cite{chen2003joint}. Practically, this is achieved by overlapping successive blocks by one symbol and using the last symbol of the previous block as the reference for the current block. 

In our experiment, phase ambiguity is resolved by artificially injecting the phase of the first symbol in the transmitted sequence at the receiver and searching for the optimal $S^{\text{MLSD}}$ within the subspace of $\mathcal{X}^{T}$. The average SER is then calculated for the remaining $(T-1)$ symbols, as the phase of the first symbol is known for a coherent interval of length $T$. We note that this is a ``genie'' case that represents the best case scenario. Moreover, knowing the phase of the first symbol in SISO BPSK and QPSK scenarios is \emph{almost} equivalent to having one pilot, and is analogous to the setup used in previous baseline algorithms that include one pilot and a total decision feedback enhanced context sequence of length $(T-1)$. This ensures a relatively fair comparison between the methods. Consequently, we enumerate $T$ from 2 to 31, compute the mean {SER} over tasks for detecting a sequence of $T$ received signals, and plot it at the index corresponding to a context sequence length of $(T-1)$ in the following results.


\begin{figure}[!htbp]
    \centering
    \includegraphics[width=0.45\textwidth]{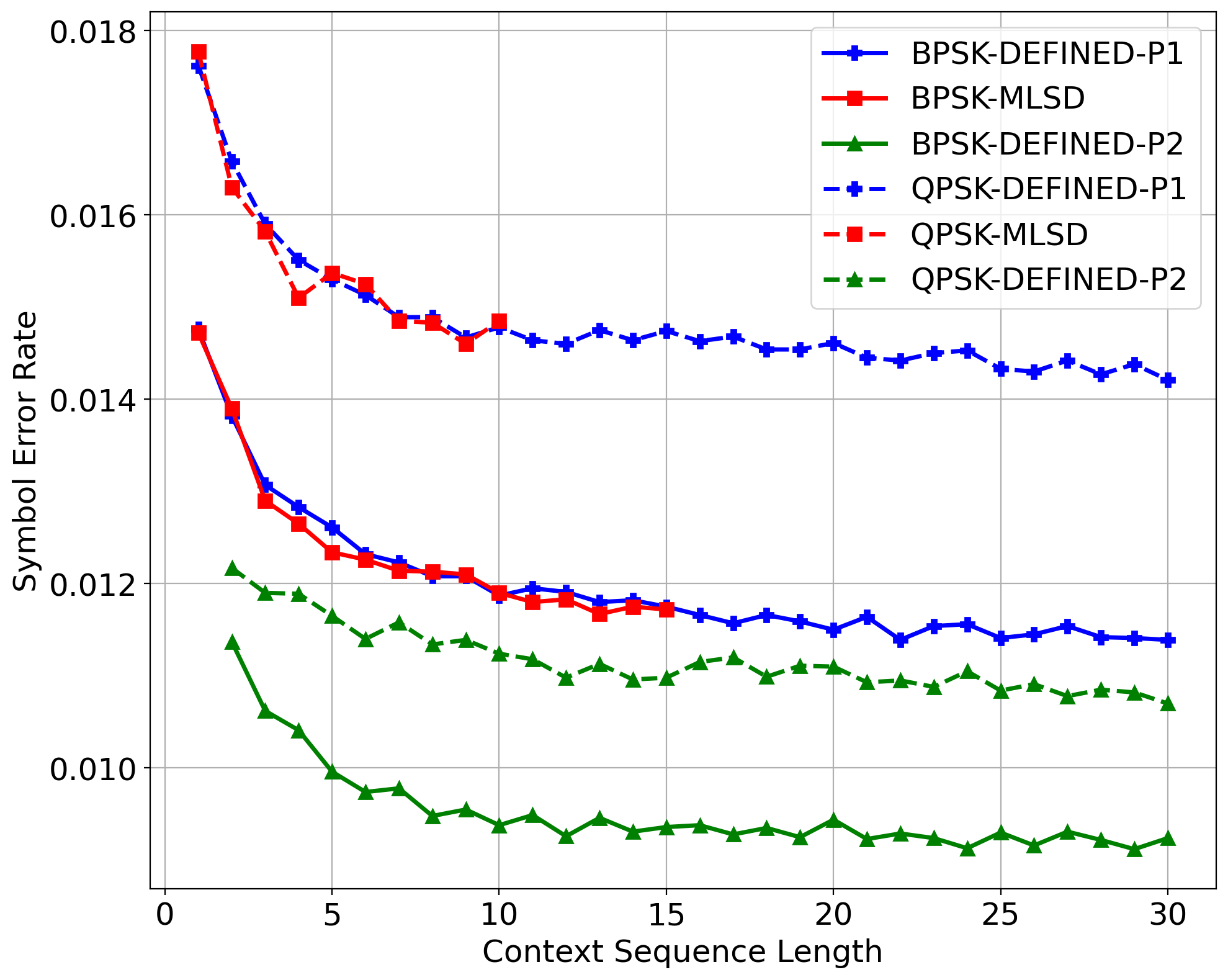}
    \caption{Comparison between \alg\ with one and two pilot pairs and the MLSD algorithm in SISO systems for BPSK at SNR = 15 dB and QPSK at SNR = 20 dB.}

    \label{fig:SISO_MLSD}
\end{figure}

As shown in \cref{fig:SISO_MLSD}, we compare the performance of the non-coherent MLSD algorithm with \alg in the SISO system using BPSK and QPSK modulations. {The implementation of efficient MLSD algorithms has been an active area of research for decades \cite{papailiopoulos2013maximum}. However, since the problem is fundamentally NP-hard, computationally efficient algorithms often suffer from performance degradation (albeit small in some settings). We thus implement MLSD using an exhaustive search for varying sequence lengths. As the computational complexity grows \emph{exponentially} with the sequence length, we only report the feasible results within some reasonable timeframe.}

The DEFINED-P1 curves demonstrate that our model’s performance with one pilot under DF-testing matches that of the MLSD detector, which is the optimal {sequence} estimator with the {additional} first symbol phase information (almost equivalent to one pilot). Furthermore, with two pilots, \alg significantly outperforms the MLSD detector, as shown by the {DEFINED-P2} line. The adaptability of \alg allows it to smoothly transition between non-coherent and coherent detection modes (via varying the amount of pilots) while consistently benefiting from decision feedback data.

Our experiment results demonstrate that \alg with one pilot empirically aligns with the MLSD estimator at short sequence lengths. This empirical finding aligns with previous theoretical results on ICL \cite{shen2024training, panwarcontext, raventos2024pretraining}. Practically, although \alg achieves similar or slightly better performance than MLSD during a single forward pass, we can avoid the exponentially increasing computational complexity associated with a true MLSD. Moreover, as the number of pilot pairs increases, \alg smoothly transitions to coherent detection and significantly outperforms the MLSD detector. Additionally, the true data distribution is often infeasible to obtain in real-world scenarios, making MLSD detection challenging to implement. In contrast, data-driven methods like Transformers show great potential in wireless communication tasks.

\section{Conclusions}
\label{sec:conc}

Inspired by decision feedback in wireless communication, we proposed \alg to enhance symbol detection by incorporating decision feedback into in-context prompts. Our approach achieved significant performance gains with limited pilot data while maintaining high accuracy when pilot information is sufficient, making it highly applicable to practical scenarios. Extensive experiments across various modulation schemes confirmed the robustness and adaptability of our model. Beyond empirical performance, we provided theoretical insights into the generalization behavior of pre-trained Transformers under SNR mismatch and derived an error lower bound for the \alg model. These findings highlight the potential of Transformer architectures as powerful in-context learners for wireless systems and motivate further theoretical exploration of their limits and capabilities.

\appendix

\begin{proof}[Proof of Theorem~\ref{thm: lower bound}]
We denote 
$\mu=\mu_1-\mu_0$,
$u=2(\mu_1+\mu_0)$,
$q=y_\query$. Define $p=\frac{2}{k} \sum_{i=1}^k y_{i}x_{i}$. Since with probability $\probp{x_{i}=1}=1/2$, $y_{i}=\mu_{1}+v_i$, with probability $\probp{x_{i}=-1}=1/2$, $y_{i}=\mu_{0}+v_i$, where $v_i\sim\normal(0,\Lambda)$, we have $p=2k_1\mu_{1}/k-2k_0\mu_{0}/k+g$, where $g=\frac{2}{k}\sum_{i=1}^kv_i$, $g\sim\normal(0,4\Lambda/k)$, 
$k_1$ is a binomial random variable.
Defining $h= k_1/k-1/2$,  we have 
$p=\mu+hu +g$.
The output of the trained Transformer is
\begin{align}
    \widehat{x}_\out=S\pth{\pth{\frac{2}{k} \sum_{i=1}^k \testy_i \testx_i^\top}\Lambda^{-1}\testy_\query}=S(p^\top \Lambda^{-1}q).
\end{align}
The probability of $x_\query=1$ given $y_\query$ is 
    $$\probp{x_\query=1 | y_\query}=S((\mu_1-\mu_0)^\top \Lambda^{-1} x_\query)=S(\mu^\top\Lambda^{-1}q).$$

Defining $a=\mu^\top  \Lambda^{-1} q$, $b=(hu +g)^\top\Lambda^{-1}q$,  we have
    \begin{align*}
        &p^\top \Lambda^{-1} q=(\mu+hu +g)^\top\Lambda^{-1}q\\
        =&(hu +g)^\top\Lambda^{-1}q+ \mu^\top\Lambda^{-1}q=a+b,
    \end{align*}
and $S(p^\top \Lambda^{-1}q)=S(a+b)=S(a)+S'(a)b+S''(\xi(a,b))b^2/2$, 
where $\xi$ are real numbers between $a$ and $a+b$.
Thus, we have
\begin{align*}
    &\E[\pth{S(a+b)-S(a)}^2]=\E[\pth{S'(a)b+S''(\xi(a,b))b^2/2}^2]\\
    =& \E[(S'(a))^2b^2]+\E[S'(a)S''(\xi(a,b))b^3+(S''(\xi(a,b)))^2b^4/4]
\end{align*}
Note that $\E[(S'(a))^2b^2]=(S'(a))^2\E[b^2]$. Let us first consider: 
\begin{align*}
    \E[b^2]=&\E[h^2 u^\top \Lambda^{-1} q u^\top \Lambda^{-1} q + g^\top \Lambda^{-1} qg^\top \Lambda^{-1}q]\\
    \overset{(a)}{=}&(u^\top \Lambda^{-1} q)^2 /(4k) + 4q^\top \Lambda^{-1} q/k
\end{align*}
where $(a)$ is due to Lemma D.1 in \cite{shen2024training} that $\E[h^2]=1/(4k)$, $g^\top \Lambda^{-1} qg^\top \Lambda^{-1}q=q^\top\Lambda^{-1} gg^\top \Lambda^{-1}q=(gg^\top \Lambda^{-1} q)^\top \Lambda^{-1}q$ and $\E[gg^\top]=4\Lambda/k$.

For term $\E[S'(a)S''(\xi(a,b))b^3+(S''(\xi(a,b)))^2b^4/4]$, we have 
\begin{align*}
    \E[|S'(a)S''(\xi(a,b))b^3+(S''(\xi(a,b)))^2b^4/4|]\leq& \E[|b^3|]+\E[b^4].
\end{align*}
Terms in $\E[|b^3|]$ contain three $h$, or three $g$, or one $h$ and two $g$, or two $h$ and one $g$. According to Lemma D.1, we have $\E[|h|]=O(1/\sqrt{k})$, $\E[h^2]=1/(4k)$, $\E[|h^3|]= O(k^{-3/2})$. Moreover,  $g=2k^{-1/2}\Lambda^{1/2}\Bar{g}$ and $\Bar{g}\sim \normal(0,I)$. Converting one $g$ to $\Bar{g}$, we have a coefficient of $k^{-1/2}$. Thus, terms in $\E[|b^3|]$ are $O(k^{-3/2})$. Similarly, we have $\E[b^4]=O(k^{-2})$. Thus, we have
\begin{align*}
    &\E[|S'(a)S''(\xi(a,b))b^3+(S''(\xi(a,b)))^2b^4/4|]=o(1/k).
\end{align*}

Putting all together, we have: 
\begin{align*}
    &\E[|\widehat{x}_\out-\probp{x_\query=1 | y_\query}|^2]=\E[\pth{S(a+b)-S(a)}^2]\\
    =&\frac{1}{k}(S'(a))^2[(u^\top \Lambda^{-1} q)^2 /4 + 4q^\top \Lambda^{-1} q]+o(1/k).
\end{align*}
\end{proof}


\begin{proof}[Proof of Theorem~\ref{thm:mismatch}]
    Suppose a random variable $x$ conditioned on $y$ has distribution $\probp{x=1|y}=p$ and $\probp{x=-1|y}=1-p$. Suppose the distribution of a predictor $\tilde x$ conditioned on $y$ is $\probp{\tilde x=1|y}=q$ and $\probp{\tilde x=-1|y}=1-q$. Given $y$, the accuracy between $x$ and $\tilde x$ is $Acc(x, \tilde x)=pq+(1-p)(1-q)$, where $p,q\in[0,1]$.
    We can easily find that the best predictor $\tilde x$ that maximize the accuracy is
    \begin{align*}
        \tilde x =\left\{ \begin{array}{ll}
            1, &if p>1/2\\
            -1, &if   p\leq 1/2       \end{array} \right.
    \end{align*}
    Note that in our setting, $\probp{x_\query=1 | y_\query}=S(\sigma^{-2}(\mu_1-\mu_0)^\top y_\query)$. Thus,
    the best predictor $x_p$ that maximize the accuracy between $ x_p$ and $x_\query$ should be
    \begin{align*}
        x_p=\left\{ \begin{array}{ll}
            1, &if (\mu_1-\mu_0)^\top y_\query>0\\
            -1, &if  (\mu_1-\mu_0)^\top y_\query\leq 0       \end{array} \right.
    \end{align*}
    Moreover, according to Theorem \ref{thm: lower bound}, when $k\to\infty$, 
    $\widehat x_\out\to S(\xi^{-2}(\mu_1-\mu_0)^\top y_\query).$ 
    Thus, if we define $\widehat x_p$ as 
    \begin{align*}
        \widehat x_p=\left\{ \begin{array}{ll}
            1, &if x_\out>1/2\\
            -1, &if  x_\out\leq 1/2       \end{array} \right.
    \end{align*}
    Then, $\widehat x_p$ is equivalent to $x_p$ when $k\to\infty$.
\end{proof}

\bibliographystyle{IEEEtran}
\bibliography{ref, Shen, wireless}

\begin{thebibliography}{10}
\providecommand{\url}[1]{#1}
\csname url@samestyle\endcsname
\providecommand{\newblock}{\relax}
\providecommand{\bibinfo}[2]{#2}
\providecommand{\BIBentrySTDinterwordspacing}{\spaceskip=0pt\relax}
\providecommand{\BIBentryALTinterwordstretchfactor}{4}
\providecommand{\BIBentryALTinterwordspacing}{\spaceskip=\fontdimen2\font plus
\BIBentryALTinterwordstretchfactor\fontdimen3\font minus \fontdimen4\font\relax}
\providecommand{\BIBforeignlanguage}[2]{{%
\expandafter\ifx\csname l@#1\endcsname\relax
\typeout{** WARNING: IEEEtran.bst: No hyphenation pattern has been}%
\typeout{** loaded for the language `#1'. Using the pattern for}%
\typeout{** the default language instead.}%
\else
\language=\csname l@#1\endcsname
\fi
#2}}
\providecommand{\BIBdecl}{\relax}
\BIBdecl

\bibitem{fan2025icc}
L.~Fan, J.~Yang, and C.~Shen, ``Decision feedback in-context symbol detection over block-fading channels,'' in \emph{IEEE International Conference on Communications (ICC)}, June 2025, pp. 1--6.

\bibitem{hoydis2021toward}
J.~Hoydis, F.~Ait~Aoudia, A.~Valcarce, and H.~Viswanathan, ``Toward a {6G} {AI}-native air interface,'' \emph{IEEE Commun. Mag.}, vol.~59, no.~5, pp. 76--81, 2021.

\bibitem{simeone2020learning}
O.~Simeone, S.~Park, and J.~Kang, ``From learning to meta-learning: Reduced training overhead and complexity for communication systems,'' in \emph{2020 2nd 6G Wireless Summit (6G SUMMIT)}.\hskip 1em plus 0.5em minus 0.4em\relax IEEE, 2020, pp. 1--5.

\bibitem{finn2017model}
C.~Finn, P.~Abbeel, and S.~Levine, ``Model-agnostic meta-learning for fast adaptation of deep networks,'' in \emph{International conference on machine learning}.\hskip 1em plus 0.5em minus 0.4em\relax PMLR, 2017, pp. 1126--1135.

\bibitem{chen2023learning}
L.~Chen, S.~T. Jose, I.~Nikoloska, S.~Park, T.~Chen, O.~Simeone \emph{et~al.}, ``Learning with limited samples: Meta-learning and applications to communication systems,'' \emph{Foundations and Trends{\textregistered} in Signal Processing}, vol.~17, no.~2, pp. 79--208, 2023.

\bibitem{park2020learning}
S.~Park, H.~Jang, O.~Simeone, and J.~Kang, ``Learning to demodulate from few pilots via offline and online meta-learning,'' \emph{IEEE Trans. Signal Processing}, vol.~69, pp. 226--239, 2020.

\bibitem{vaswani2017attention}
A.~Vaswani, N.~Shazeer, N.~Parmar, J.~Uszkoreit, L.~Jones, A.~N. Gomez, L.~Kaiser, and I.~Polosukhin, ``Attention is all you need,'' \emph{Advances in Neural Information Processing Systems}, 2017.

\bibitem{radford2019language}
A.~Radford, J.~Wu, R.~Child, D.~Luan, D.~Amodei, I.~Sutskever \emph{et~al.}, ``Language models are unsupervised multitask learners,'' \emph{OpenAI blog}, vol.~1, no.~8, p.~9, 2019.

\bibitem{NEURIPS2020_1457c0d6}
B.~Mann \emph{et~al.}, ``Language models are few-shot learners,'' in \emph{Advances in Neural Information Processing Systems}, H.~Larochelle, M.~Ranzato, R.~Hadsell, M.~Balcan, and H.~Lin, Eds., vol.~33.\hskip 1em plus 0.5em minus 0.4em\relax Curran Associates, Inc., 2020, pp. 1877--1901.

\bibitem{teja2023transformers}
V.~Teja~Kunde, V.~Rajagopalan, C.~Shekhara Kaushik~Valmeekam, K.~Narayanan, S.~Shakkottai, D.~Kalathil, and J.-F. Chamberland, ``Transformers are provably optimal in-context estimators for wireless communications,'' \emph{arXiv e-prints}, pp. arXiv--2311, 2023.

\bibitem{zecchin2024context}
M.~Zecchin, K.~Yu, and O.~Simeone, ``In-context learning for {MIMO} equalization using transformer-based sequence models,'' in \emph{2024 IEEE International Conference on Communications Workshops (ICC Workshops)}.\hskip 1em plus 0.5em minus 0.4em\relax IEEE, 2024, pp. 1573--1578.

\bibitem{zecchin2024cell}
------, ``Cell-free multi-user {MIMO} equalization via in-context learning,'' in \emph{IEEE International Workshop on Signal Processing Advances in Wireless Communications (SPAWC)}.\hskip 1em plus 0.5em minus 0.4em\relax IEEE, 2024, pp. 646--650.

\bibitem{abbas2024leveraging}
M.~Abbas, K.~Kar, and T.~Chen, ``Leveraging large language models for wireless symbol detection via in-context learning,'' \emph{arXiv preprint arXiv:2409.00124}, 2024.

\bibitem{Liang2018jstsp}
F.~{Liang}, C.~{Shen}, and F.~{Wu}, ``An iterative {BP-CNN} architecture for channel decoding,'' \emph{IEEE J. Sel. Topics Signal Process.}, vol.~12, no.~1, pp. 144--159, February 2018.

\bibitem{Liang2020tc}
F.~{Liang}, C.~{Shen}, W.~{Yu}, and F.~{Wu}, ``Towards optimal power control via ensembling deep neural networks,'' \emph{IEEE Trans. Commun.}, vol.~68, no.~3, pp. 1760--1776, March 2020.

\bibitem{simeone2018very}
O.~Simeone, ``A very brief introduction to machine learning with applications to communication systems,'' \emph{IEEE Trans. Cogn. Commun. Netw.}, vol.~4, no.~4, pp. 648--664, 2018.

\bibitem{yang2024dyspan}
K.~Yang, C.~Shi, C.~Shen, J.~Yang, S.~Yeh, and J.~Sydir, ``Advancing {RAN} slicing with offline reinforcement learning,'' in \emph{IEEE DySPAN}, May 2024, pp. 1--6.

\bibitem{yang2024twc}
------, ``Offline reinforcement learning for wireless network optimization with mixture datasets,'' \emph{IEEE Trans. Wireless Commun.}, vol.~23, no.~10, pp. 12\,703--12\,716, October 2024.

\bibitem{yang2024asilomar}
K.~Yang, J.~Yang, and C.~Shen, ``Average reward reinforcement learning for wireless radio resource management,'' in \emph{Asilomar Conference on Signals, Systems, and Computers}, October 2024, pp. 1--6.

\bibitem{garg2022can}
S.~Garg, D.~Tsipras, P.~S. Liang, and G.~Valiant, ``What can transformers learn in-context? a case study of simple function classes,'' \emph{Advances in Neural Information Processing Systems}, vol.~35, 2022.

\bibitem{panwarcontext}
M.~Panwar, K.~Ahuja, and N.~Goyal, ``In-context learning through the bayesian prism,'' in \emph{International Conference on Learning Representations}, 2023.

\bibitem{raventos2024pretraining}
A.~Ravent{\'o}s, M.~Paul, F.~Chen, and S.~Ganguli, ``Pretraining task diversity and the emergence of non-bayesian in-context learning for regression,'' \emph{Advances in Neural Information Processing Systems}, 2024.

\bibitem{von2023transformers}
J.~Von~Oswald, E.~Niklasson, E.~Randazzo, J.~Sacramento, A.~Mordvintsev, A.~Zhmoginov, and M.~Vladymyrov, ``Transformers learn in-context by gradient descent,'' in \emph{International Conference on Machine Learning}.\hskip 1em plus 0.5em minus 0.4em\relax PMLR, 2023, pp. 35\,151--35\,174.

\bibitem{ahn2023transformers}
K.~Ahn, X.~Cheng, H.~Daneshmand, and S.~Sra, ``Transformers learn to implement preconditioned gradient descent for in-context learning,'' \emph{Advances in Neural Information Processing Systems}, vol.~36, 2023.

\bibitem{chan2022data}
S.~Chan, A.~Santoro, A.~Lampinen, J.~Wang, A.~Singh, P.~Richemond, J.~McClelland, and F.~Hill, ``Data distributional properties drive emergent in-context learning in transformers,'' \emph{Advances in Neural Information Processing Systems}, vol.~35, pp. 18\,878--18\,891, 2022.

\bibitem{shao2024wirelessllm}
J.~Shao, J.~Tong, Q.~Wu, W.~Guo, Z.~Li, Z.~Lin, and J.~Zhang, ``{WirelessLLM}: Empowering large language models towards wireless intelligence,'' \emph{arXiv preprint arXiv:2405.17053}, 2024.

\bibitem{tong2024wirelessagent}
J.~Tong, J.~Shao, Q.~Wu, W.~Guo, Z.~Li, Z.~Lin, and J.~Zhang, ``Wirelessagent: Large language model agents for intelligent wireless networks,'' \emph{arXiv preprint arXiv:2409.07964}, 2024.

\bibitem{ott2024radio}
J.~Ott, J.~Pirkl, M.~Stahlke, T.~Feigl, and C.~Mutschler, ``Radio foundation models: Pre-training transformers for {5G}-based indoor localization,'' \emph{arXiv preprint arXiv:2410.00617}, 2024.

\bibitem{rashvand2024enhancing}
N.~Rashvand, K.~Witham, G.~Maldonado, V.~Katariya, N.~Marer~Prabhu, G.~Schirner, and H.~Tabkhi, ``Enhancing automatic modulation recognition for iot applications using transformers,'' \emph{IoT}, vol.~5, no.~2, pp. 212--226, 2024.

\bibitem{zhang2024spectrum}
W.~Zhang, Y.~Wang, X.~Chen, Z.~Cai, and Z.~Tian, ``Spectrum transformer: An attention-based wideband spectrum detector,'' \emph{IEEE Trans. Wireless Commun.}, 2024.

\bibitem{song2024neuromorphic}
Z.~Song, O.~Simeone, and B.~Rajendran, ``Neuromorphic in-context learning for energy-efficient {MIMO} symbol detection,'' \emph{arXiv preprint arXiv:2404.06469}, 2024.

\bibitem{TV:05}
D.~Tse and P.~Viswanath, \emph{Fundamentals of Wireless Communication}.\hskip 1em plus 0.5em minus 0.4em\relax Cambridge University Press, 2005.

\bibitem{Cav:91}
J.~K. Cavers, ``An analysis of pilot symbol assisted modulation for {R}ayleigh faded channels,'' \emph{IEEE Trans. Veh. Tech.}, vol. vol. VT-40, pp. 686--693, November 1991.

\bibitem{Shen2008jsac}
C.~{Shen} and M.~P. {Fitz}, ``{MIMO-OFDM} beamforming for improved channel estimation,'' \emph{IEEE J. Select. Areas Commun.}, vol.~26, no.~6, pp. 948--959, August 2008.

\bibitem{le2021deep}
A.~Le~Ha, T.~Van~Chien, T.~H. Nguyen, W.~Choi \emph{et~al.}, ``Deep learning-aided {5G} channel estimation,'' in \emph{International Conference on Ubiquitous Information Management and Communication (IMCOM)}.\hskip 1em plus 0.5em minus 0.4em\relax IEEE, 2021.

\bibitem{neumann2018learning}
D.~Neumann, T.~Wiese, and W.~Utschick, ``Learning the {MMSE} channel estimator,'' \emph{IEEE Trans. Signal Processing}, vol.~66, no.~11, pp. 2905--2917, 2018.

\bibitem{lu2018mimo}
C.~Lu, W.~Xu, H.~Shen, J.~Zhu, and K.~Wang, ``{MIMO} channel information feedback using deep recurrent network,'' \emph{IEEE Commun. Letter}, vol.~23, no.~1, pp. 188--191, 2018.

\bibitem{aoudia2021end}
F.~A. Aoudia and J.~Hoydis, ``End-to-end learning for ofdm: From neural receivers to pilotless communication,'' \emph{IEEE Trans. Wireless Commun.}, vol.~21, no.~2, pp. 1049--1063, 2021.

\bibitem{chen2018neural}
Z.~Chen, Z.~He, K.~Niu, and Y.~Rong, ``Neural network-based symbol detection in high-speed {OFDM} underwater acoustic communication,'' in \emph{2018 10th International Conference on Wireless Communications and Signal Processing (WCSP)}.\hskip 1em plus 0.5em minus 0.4em\relax IEEE, 2018, pp. 1--5.

\bibitem{kim2020massive}
H.~Kim, S.~Kim, H.~Lee, C.~Jang, Y.~Choi, and J.~Choi, ``Massive {MIMO} channel prediction: Kalman filtering vs. machine learning,'' \emph{IEEE Trans. Commun.}, vol.~69, no.~1, pp. 518--528, 2020.

\bibitem{shen2024training}
W.~Shen, R.~Zhou, J.~Yang, and C.~Shen, ``On the training convergence of transformers for in-context classification,'' \emph{arXiv preprint arXiv:2410.11778}, 2024.

\bibitem{deng2003decision}
X.~Deng, A.~M. Haimovich, and J.~Garcia-Frias, ``Decision directed iterative channel estimation for mimo systems,'' in \emph{Proc. IEEE International Conference on Communications (ICC)}, vol.~4.\hskip 1em plus 0.5em minus 0.4em\relax IEEE, 2003, pp. 2326--2329.

\bibitem{karami2006decision}
E.~Karami and M.~Shiva, ``Decision-directed recursive least squares {MIMO} channels tracking,'' \emph{EURASIP Journal on Wireless Communications and Networking}, vol. 2006, pp. 1--10, 2006.

\bibitem{papailiopoulos2013maximum}
D.~S. Papailiopoulos, G.~Abou~Elkheir, and G.~N. Karystinos, ``Maximum-likelihood noncoherent pam detection,'' \emph{IEEE Trans. Commun.}, vol.~61, no.~3, pp. 1152--1159, 2013.

\bibitem{ryan2006blind}
D.~J. Ryan, I.~V.~L. Clarkson, and I.~B. Collings, ``Blind detection of pam and qam in fading channels,'' \emph{IEEE Trans. Info. Theory}, vol.~52, no.~3, pp. 1197--1206, 2006.

\bibitem{chen2003joint}
R.-R. Chen, R.~Koetter, U.~Madhow, and D.~Agrawal, ``Joint noncoherent demodulation and decoding for the block fading channel: A practical framework for approaching {Shannon} capacity,'' \emph{IEEE Trans. Commun.}, vol.~51, no.~10, pp. 1676--1689, 2003.

\end{thebibliography}

\end{document}